    \newrobustcmd\ubold{\DeclareFontSeriesDefault[rm]{bf}{b}\bfseries}
\newtheorem{definition}{Definition}
\newtheorem{theorem}{Theorem}
\newtheorem{proposition}{Proposition}
\newcommand{\thickhline}{%
    \noalign {\ifnum 0=`}\fi \hrule height 1pt
    \futurelet \reserved@a \@xhline
}
\newcommand{\Hide}[1]{}
\let\ignore\Hide
\DeclareMathOperator{\polylog}{polylog}
\begin{document}

\title{Faster Secure Comparisons with Offline Phase \\ for Efficient Private Set Intersection}


%
%

\author{\IEEEauthorblockN{Florian Kerschbaum}
\IEEEauthorblockA{University of Waterloo\\
Canada \\
florian.kerschbaum@uwaterloo.ca}
\and
\IEEEauthorblockN{Erik-Oliver Blass}
\IEEEauthorblockA{Airbus\\
Germany \\
erik-oliver.blass@airbus.com}
\and
\IEEEauthorblockN{Rasoul Akhavan Mahdavi}
\IEEEauthorblockA{University of Waterloo\\
Canada \\
rasoul.akhavan.mahdavi@uwaterloo.ca}}

\ignore{
\IEEEoverridecommandlockouts
\makeatletter\def\@IEEEpubidpullup{6.5\baselineskip}\makeatother
\IEEEpubid{\parbox{\columnwidth}{
    Network and Distributed System Security (NDSS) Symposium 2023\\
    28 February - 4 March 2023, San Diego, CA, USA\\
    ISBN 1-891562-83-5\\
    https://dx.doi.org/10.14722/ndss.2023.23xxx\\
    www.ndss-symposium.org
}
\hspace{\columnsep}\makebox[\columnwidth]{}}
}

\maketitle

\begin{abstract}
In a Private section intersection (PSI) protocol, Alice and Bob compute the intersection of their respective sets without disclosing any element not in the intersection.
  PSI protocols have been extensively studied in the literature and are deployed in industry.
With state-of-the-art protocols achieving optimal asymptotic complexity, performance improvements are rare and can only improve complexity constants.
In this paper, we present a new private, extremely efficient comparison protocol that leads to a PSI protocol with low constants.
A useful property of our comparison protocol is that it can be divided into an online and an offline phase.
All expensive cryptographic operations are performed during the offline phase, and the online phase performs only four fast field operations per comparison.
This leads to an incredibly fast online phase, and our evaluation shows that it outperforms related work, including KKRT (CCS'16), VOLE-PSI (EuroCrypt'21), and  OKVS (Crypto'21).
We also evaluate standard approaches to implement the offline phase using different trust assumptions: cryptographic, hardware, and a third party (``dealer~model'').
\end{abstract}






\section{Introduction}
\label{sec:intro}

Companies collect increasingly larger amounts of data about their
customers' operation. Each company collects different data depending
on their business, and the combination of these different data sets
offers greater benefit than each set by itself.
A standard example is Google collecting which user clicks on which online ad while
Mastercard collects financial transactions performed by its clients using their cards. 
To allow Google to compute the number of successful transactions after a user clicked on an online ad, Google and Mastercard link their data based on a common user identifier, e.g., the user's phone number.

Abstractly, this is an instance of Private Set Intersection (PSI).
In PSI, two parties, each have a set of (unique) elements and want to compute their intersection without revealing any element not in the intersection.
PSI is indeed deployed by Google and Mastercard to analyze ad conversions \cite{psigoogle20,psigoogle15}, but it has many more applications.
Consequently, PSI has recently been extensively studied in the literature~\cite{authpsi,certifiedsets,psigoogle15,psigoogle20,opprf,cm,kkrt,spot,psiot,phasing,vole,freedman,meadows,facebook,secsharedPSI,yaoPSI,imbalancePSI,imbalance,fhePSI1,fhePSI2,origoopprf,dualcuckoo,otoprf,malicious,paxos,hashotex,rbc1,rbc2,outsourcedPSI,delegated,delegatedcard,arbiterPSI,mulPSI,threshold,psi04,googlemal,fasterfhepsi,pets22,crypto21}.
The currently most efficient state-of-the-art PSI protocols are based on oblivious pseudo-random functions~\cite{kkrt,vole,cm}.
They require a constant number of public-key cryptography operations, linearly many symmetric key cryptography operations, and one round of interaction.
This is asymptotically optimal, and any performance improvement can only stem from reduced constants which are, however, already very low.
We note that PSI requires public-key operations, since two-party computation can be reduced to PSI (see our reduction in Section~\ref{sec:reduction}), and two-party computation requires public-key operations.
Hence, any (new) approach must deal with these unavoidable and expensive operations.

In this paper, we present a new (equality) comparison protocol that is simple, elegant, and very efficient.
Our comparison protocol improves over the state-of-the-art in two aspects:
first, it reduces (equality) comparison to {\em oblivious linear evaluation} (OLE), and, second, it enables the use of offline precomputed OLE tuples instead of computing the OLE online.
This results in an alternative construction of PSI that off-loads all expensive cryptographic operations, public and symmetric key, to an initial offline phase.
The offline phase precomputes correlated randomness and can be run in advance, independently of the inputs to the protocol.
The online phase uses this randomness and the inputs to securely compute the output.
Our online phase is highly efficient, comprising only four fast operations in a small field, i.e., one multiplication and three additions per comparison. Moreover, it takes only one round.
The offline phase can be implemented using standard approaches based on only cryptographic assumptions, e.g., using lattice-based homomorphic encryption or Oblivious Transfer (OT), but also based on more efficient hardware or other trust assumptions, such as a trusted third party (``dealer model''~\cite{hastings}).


\subsection{Why consider an offline phase?}
\label{sec:whyoffline}

Our offline phase enables a very fast online phase.
In Figure~\ref{fig:over} we display a comparison of computation time and communication cost to related work \cite{cm,kkrt,spot,vole,crypto21} for a data set size of 1 million elements {\em in the online phase online}.
Note that VOLE-PSI~\cite{vole} requires less communication for these small sets.
On a larger data set with 16 million elements our online phase is between $2.4$ and $3.5$ times faster than Kolesnikov et al.'s work~\cite{kkrt} and $1.2$ and $14.6$ times faster than \citeauthor{vole}'s work~\cite{vole} while requiring less communication than either one of them.
In general, our advantage increases as data set sizes grow, but \citeauthor{vole} neither make their code available nor do they report evaluation results for larger data set sizes which makes comparison difficult.

\begin{figure}[!t]\centering
  \begin{center}
    \input{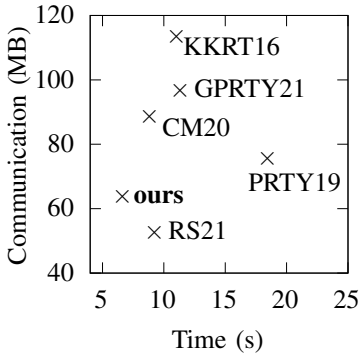}
  \end{center}
  \caption{\label{fig:over}Time (online) and communication of our protocol vs  state of the art (KKRT16~\cite{kkrt}, PRTY19~\cite{spot}, CM20~\cite{cm}, RS21~\cite{vole}, and GPRTY21~\cite{crypto21}), sets of size $n=2^{20}$, $100$~MBit/s bandwidth.}
\end{figure}

Employing an offline phase is common in two-party computation, even if the total run-time (slightly) increases, and also in the context of PSI, it is not new.
We emphasize that the commonly most efficient related work, VOLE-PSI by Rindal and Schoppmann~\cite{vole}, already requires an offline phase to derive vector OLE tuples and has been developed by industry, thus demonstrating the acceptability of such an offline phase.
The type of  offline phase we employ has further advantages, since it can be efficiently implemented using alternatives to cryptographic assumptions and similar code for the offline phase is re-usable among secure computation protocols based on Beaver multiplication triples~\cite{beaver}. This includes the prominent SPDZ family of protocols~\cite{spdz,mascot,overdrive} for generic multi-party computation.

In summary, we believe that there are good reasons to consider our construction of a PSI protocol using an offline phase.
There already exist many protocols for the offline phase, e.g., for secure two-party computation, and our work reuses those protocols in its construction or implements adjustments on top of them.
This has the additional advantage that further research in improving these commonly used protocols will readily extend to our construction.
Given additional security assumptions, such as the availability of trusted execution environments during the offline phase (only), our (entire) protocol reduces computation and communication costs compared to prior work.

\subsection{Contributions}
This paper contributes a new private set intersection protocol based on a new comparison protocol that can be divided into an {\em offline} and {\em online} phase.
In particular, we provide
\begin{itemize}
    \item a new, highly efficient {\em online} phase of our PSI/comparison protocol that uses no cryptographic operations, but uses precomputed, random {\em OLE tuples}.
    \item several adaptations of existing protocols for the {\em offline} phase, based on lattice-based homomorphic encryption, oblivious transfer, trusted execution environments, and a trusted third party.
    \item formal security proofs of our protocols.
    \item an extensive experimental evaluation of our protocol in comparison to related work \cite{cm,kkrt,spot,vole,crypto21}.
\end{itemize}

Our PSI protocol's online phase is $1.2$ (10 MBit/s network) to $3.5$ (5 GBit/s network) times faster than the currently most efficient related work \cite{kkrt,spot,cm,vole,crypto21}.
The maximum performance increase of our protocol in total time shrinks to $1.5$ when used with an offline phase based on additional hardware trust assumptions not made in related work.

The next section contains preliminary definitions and building blocks used in our protocols.
In Section~\ref{sec:psi} we describe the online phase of our comparison and PSI protocol.
In Section~\ref{sec:precomputation} we describe different protocols for the offline phase.
We present our experimental results in Section~\ref{sec:evaluation}.
We compare our work to related work in Section~\ref{sec:related} and present our conclusions in Section~\ref{sec:conclusions}.

\section{Preliminaries}
\label{sec:prelim}


\subsection{Semi-Honest Security Model}
\label{sec:semihonest}

We consider two-party PSI computation in the semi-honest, or passive, security model.
In this model, parties Alice and Bob are assumed to follow the protocol as prescribed but keep a record of each message received, their random coins, and their input.
From this information, called the view $\mathsf{View}_X$ of a party $X \in \{ A, B \}$, they try to compute additional information beyond the output.
Note that the output of a party $X$ can be computed from its view $\mathsf{View}_X$.
Informally, a protocol is secure in the semi-honest security model, if a party cannot compute additional information.
This can be proven by showing the existence of a simulator $\mathsf{Sim}_X$ that given the party's input $x$ or $y$ and the respective output $\chi$ or $\psi$ produces a simulation of the view that is indistinguishable from the party's view $\mathsf{View}_X$ during the protocol.

\begin{definition}
Protocol $\pi$ is secure in the {\em semi-honest model}, if there exist simulators $\mathsf{Sim}_A(x, \chi)$ and $\mathsf{Sim}_B(y, \psi)$, such that
\begin{eqnarray*}
\mathsf{View}^{\pi}_A(x, y) & = & \mathsf{Sim}_A(x, \chi) \\
\mathsf{View}^{\pi}_B(x, y) & = & \mathsf{Sim}_B(y, \psi)
\end{eqnarray*}
\end{definition}

The semi-honest security model assumes that an adversary behaves passively, i.e., does not modify its messages.
The malicious security model removes this assumption and considers active adversaries behaving arbitrarily.
Any protocol secure in the semi-honest model can be compiled into one secure in the malicious model using the GMW compiler \cite{gmw}, but this compilation may lead to a very inefficient protocol.
Hence, many previous works on PSI consider the semi-honest model.
The malicious model for PSI has two limitations:
First, since the intersection is revealed, a malicious party may simply substitute its input and learn (parts of) the other party's set which cannot be prevented in the malicious model.
Second, in one-round protocols, such as ours, arbitrary behaviour is only possible in the input-carrying first message.
Hence, additional information can only be computed from the message (output) received in response to this first message.
Techniques such as certification of inputs~\cite{certifiedsets,authpsi} can mitigate input substitution attacks.
Our protocol can be augmented with input certification.
However, we leave a detailed description to future work.

\subsection{Oblivious Transfer}
\label{sec:ot}

Oblivious transfer (OT) \cite{ot} is a protocol between two parties Alice and Bob.
Its simplest variant is a 1-out-of-2 OT \cite{1outof2ot}.
Alice has input $b \in \{0, 1\}$.
Bob has input $y_0, y_1$.
An OT protocol is {\em correct} if Alice obtains $y_b$.
An OT protocol is {\em secure} if Alice learns nothing about $y_{1-b}$, and Bob learns nothing about $b$.
OT is a powerful primitive and can be used to implement secure two-party computations~\cite{gmw,kilian,foundingoteff}.
Hence, any OT protocol requires at least one public-key operation, since secure two-party computations require at least one public-key operation.

\subsection{Oblivious Linear Evaluation}
Oblivious Linear Evaluation (OLE) is a building block in many secure
computation protocols~\cite{vole} and a method to generate correlated
randomness between two parties.  From the slightly different OLE
definitions in the literature~\cite{applebaum,ole,vector-ole,vole,compress}, we use the
following.

Oblivious Linear Evaluation is a secure two-party computation protocol between parties Alice and Bob, such that Bob samples $a,b\in \mathbb{F}$ for some field $\mathbb{F}$, and Alice samples $u\in\mathbb{F}$. After the protocol, Alice obtains $f(u)$ such that
\begin{align}\label{ole-def}
    f(u) = au + b.
\end{align}

The symmetric variant of OLE is referred to as product sharing~\cite{ole}, where Alice and Bob each know $u$ and $v$, respectively, and they obtain additive shares of the product. In other words, after the protocol, Alice obtains $a$ and Bob $b$, such that
\begin{align}\label{prod-share-def}
    uv = a + b.
\end{align}
As opposed to OLE, $a$ and $b$ are generated randomly. Product sharing can be used to construct OLE protocols~\cite{ole}.

In vector OLE (vOLE), which is a generalization of OLE, Bob knows two vectors  $A,B\in \mathbb{F}^{n}$, and Alice knows $u\in\mathbb{F}$ and obtains $V\in\mathbb{F}^{n}$ where
\begin{align}
    V = Au + B.
\end{align}

Similarly in batch OLE (bOLE), Alice also samples a vector $U\in\mathbb{F}^{n}$ and obtains $V\in\mathbb{F}^{n}$ where 
\begin{align}
    V = A \cdot U + B,
\end{align}
and the multiplication is element-wise.
Note that, similarly to the notion of random OT, there exist also random alternatives of the above OLE protocols.  So, parties receive random tuples $(a,u)$ and $(b,v)$ with $uv=a+b$.

In this work, we require a slightly different variant than (random) vOLE and bOLE for our PSI protocol. 
Parties will compute random batches of tuples, but the type of batches and choice of correlated
randomness will be different. We will present details in Section~\ref{sec:modbeaver} later.

OLE was first formalized by Applebaum et al.~\cite{applebaum} where they proposed that OLE is the arithmetic analog of oblivious transfer.
One approach for OLE over an arithmetic field is to repeatedly invoke 1-out-of-2 bit-OT to retrieve the bits of the results one by one. 
This approach has a computational complexity proportional to the size of the binary circuit that evaluates the linear function~\cite{product-sharing}.
Another approach is to use hardness assumptions in coding theory such as the pseudo-randomness of noisy random codewords in a linear code or the LPN problem~\cite{Naor2006ObliviousPE}.

Finally, protocols realizing OLE based on lattices and lattice-based encryption schemes are also common~\cite{ole,fast-ole}.
All of these approaches (except LPN) are used in this work to implement our offline phase.

\subsection{Reduction of Two-Party Computation to PSI}
\label{sec:reduction}

\begin{table}[!h]
\caption{Intersections $C\cap{}D$ for the different inputs by Alice and Bob.}
\label{tbl:reduction}
\centering
\begin{tabular}{c||c|c|c|c}
\multicolumn{1}{c||}{\multirow{4}{*}{Alice}} & \multicolumn{4}{c}{Bob} \\
\multicolumn{1}{c||}{} & \multicolumn{2}{c|}{$y_0 = 0$} & \multicolumn{2}{c}{$y_1 = 1$} \\
\multicolumn{1}{c||}{} & $y_1 = 0$ & $y_1 = 1$ & $y_1 = 0$ & $y_1 = 1$ \\
\multicolumn{1}{c||}{} & $D = \{2, 3\}$ & $D = \{1, 2\}$ & $D = \{0, 3\}$ & $D = \{0, 1\}$ \\
\hline
\hline
\begin{tabular}{c}
     $b=0$\\
     $C = \{0\}$
\end{tabular}
 & $\emptyset$ & $\emptyset$ & $\{0\}$ & $\{0\}$ \\
\hline
\begin{tabular}{c}
     $b=1$\\
     $C = \{1\}$
\end{tabular}
& $\emptyset$ & $\{1\}$ & $\emptyset$ & $\{1\}$ \\
\end{tabular}
\end{table}

To underpin the power of PSI, we now present a reduction of two-party computation to PSI.
This reduction may be folklore, but we have not found it spelled out in the literature.
We prove that the existence of one-sided PSI, where only Alice learns the intersection, implies the existence of OT by reducing OT to PSI.
It also proves that public-key operations are necessary for one-sided PSI since they are necessary for oblivious transfer and two-party computation.
The GMW protocol \cite{gmw} over the binary field $\mathbb{Z}_2$ reduces (semi-honest) two-party and multi-party computation to OT.
Reductions of maliciously secure two-party computation also exist \cite{kilian,foundingoteff}.

We compute a 1-out-of-2 OT for one-bit messages.
Alice has $b \in \{ 0, 1 \}$.
Bob has two messages $y_0, y_1 \in \{0, 1\}$.
Alice should obtain $y_{b}$, but not learn anything about $y_{1 - b}$.

If $b = 0$, Alice chooses $C = \{ 0 \}$ as her input set.
If $b = 1$, Alice chooses $C = \{ 1 \}$.
Bob starts by creating an empty set $D$.
If $y_0 = 0$, Bob adds $\{ 2 \}$ to $D$.
If $y_0 = 1$, Bob adds $\{ 0 \}$ to $D$.
If $y_1 = 0$, Bob adds $\{ 3 \}$ to $D$.
If $y_1 = 1$, Bob adds $\{ 1 \}$ to $D$. 

Alice and Bob perform a one-sided PSI protocol for sets $C$ and $D$, such that Alice learns intersection $C\cap{}D$.
Table~\ref{tbl:reduction} shows the resulting intersection.
Observe that the size of input sets is constant ($|C| = 1$, $|D| = 2$).
If the intersection $C\cap{}D$ is empty, then Alice outputs $0$.
If the intersection is either $\{0\}$ or $\{1\}$, Alice outputs $1$.

Note that privately evaluating functions over the intersection is sufficient for this reduction, but not necessary, e.g., set intersection cardinality or set disjointness.
Alice only obtains either of two possible outputs, either the empty set or her input set of size $1$.

Furthermore, OT can be reduced to labelled PSI -- an extended form of PSI.
In labelled PSI Alice obtains a message (label) for her elements in the set.
Bob simply sets the labels $y_i$ for elements $i = \{ 0, 1 \}$.
However, it does not seem obvious how to reduce labelled PSI to PSI which our reduction also implies.

\section{Online Phase}
\label{sec:psi}
For ease of exposition, we begin by  describing the online phase of our comparison and PSI protocols.

{\noindent{\bf Roadmap for this section.}}
For now, assume that the offline phase has output so called \emph{OLE tuples} which we define in Section~\ref{sec:modbeaver} and show how to compute during the offline phase in Section~\ref{sec:precomputation}.
Given an OLE tuple, we will present how to compare two elements $x$ held by Alice and $y$ held by Bob (Section~\ref{sec:comparison}).
Finally, we will use hashing techniques to construct a full PSI protocol for sets $\mathbb{X}$ held by Alice and $\mathbb{Y}$ held by Bob, each of size $n$, in Section~\ref{sec:protocol}.
We assume that the sets have the same size for simplicity of exposition and the analysis of the algorithm.
If they do not, the protocol's cost is dominated by the larger one as it is for any of the compared related work.
The goal of the full PSI protocol is to reduce the number and cost of necessary computations.
The online phase requires $O(n \log n / \log \log n)$ comparisons, but our constants are so low that, in concrete numbers, we improve over related work, most of which also require at least $O(n \log n / \log \log n)$ computational steps.

\subsection{OLE Tuples}
\label{sec:modbeaver}

We operate over a prime field $\mathbb{F}_Q$.
Let $\mathbb{F}^*_Q = \mathbb{F}_Q \setminus \{0\}$, and let $r_B$ be a random number in $\mathbb{F}^*_Q$.
Similarly, let $r_A, s_A, s_B$ be three random numbers in $\mathbb{F}_Q$.
The marginal distribution of each random number is uniform, but their joint distribution is correlated and satisfies
\begin{equation}
\label{eq:mbt}
r_A r_B = s_A + s_B
\end{equation}
If $s_A + s_B = 0$, then $r_A = 0$, but it always holds that $r_B \neq 0$.
In our implementation of the offline phase $s_A$, $s_B$, and $r_B$ are drawn independently, and $r_A$ is computed as correlated randomness from their choices to satisfy Equation~\ref{eq:mbt}.

Such a random tuple $(r_A,r_B,s_a,s_B)$ is called an \emph{OLE tuple}.
For our PSI online phase, we assume that the offline phase
has generated a sufficient number of OLE tuples and distributed the $(r_A,s_A)$ to Alice and the $(r_B,s_B)$ to Bob.

Observe the similarity of Equation~\ref{eq:mbt} to Equation~\ref{ole-def}.
If we set $r_A = u$, $r_B = a$, $s_B = -b$, and $s_A = f(u)$, we obtain a re-ordering of Equation~\ref{eq:mbt}.
As in OLE, $r_A$ and $s_A$ ($u$ and $f(u)$) are known only to Alice, and $r_B$ and $s_B$ ($a$ and $b$) are known only to Bob. 

Note that, in Section~\ref{sec:commopti}, we will also introduce a new {\em set version} of OLE tuples that further optimizes our online communication cost.
However, this optimization also comes at a cost:
We then need to adapt existing protocols in the offline phase for our set OLE tuples.

\subsection{Comparison Protocol}
\label{sec:comparison}
We describe the comparison of a single pair of input elements and then extend it to a full-fledged PSI protocol in the next section.
For some input length $\sigma\in\mathbb{N}$, Alice has element $x\in\{0,1\}^{\sigma}$, and Bob has element $y\in\{0,1\}^{\sigma}$.
As part of the PSI protocol, they want to determine whether $x=y$.
Let $H: \{0, 1 \}^{\sigma} \mapsto \mathbb{F}_Q$ be a cryptographic hash function.
In case the element's bit length is shorter than the hash's bit length ($\{0, 1 \}^{\sigma} \subset \mathbb{F}_Q$), we do not need a hash function.

The comparison consumes one OLE tuple $(r_A,r_B, s_A,s_B)$, where $(r_A,s_A)$ is know to Alice and $(r_B,s_B)$ is known to Bob, and works as follows. 
\begin{enumerate}
\item Alice starts by computing $c = s_A - H(x)$ and sending $c$ to Bob.
\item Bob computes 
\begin{equation}
d = (c + H(y) + s_B) / r_B
\label{eq:compare}
\end{equation}
and sends $d$ back to Alice.
\item Alice verifies whether $d \stackrel{?}{=} r_A$, and if they are equal, she outputs ``match''.
\end{enumerate}
  Note that we do not need to compute the multiplicative inverse of $r_B$.
We can immediately choose it when generating the OLE tuples (see Section~\ref{sec:precomputation}).

\begin{theorem}
\label{thm:correct}
Our comparison protocol is {\em correct}, i.e., if and only if $x = y$, then Alice outputs ``match'',  assuming no hash collisions.
\end{theorem}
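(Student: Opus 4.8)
The plan is to prove the biconditional by a direct algebraic substitution that reduces Alice's final test $d \stackrel{?}{=} r_A$ to the hash equality $H(x) = H(y)$, and then to invoke the collision-freeness assumption to pass from hash equality to element equality.

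First I would substitute Alice's message $c = s_A - H(x)$ into Bob's computation in Equation~\ref{eq:compare}, obtaining
\[
d = \frac{s_A - H(x) + H(y) + s_B}{r_B}.
\]
Next I would apply the defining correlation of the OLE tuple, Equation~\ref{eq:mbt}, namely $s_A + s_B = r_A r_B$, to rewrite the numerator and split the fraction:
\[
d = \frac{r_A r_B + \bigl(H(y) - H(x)\bigr)}{r_B} = r_A + \frac{H(y) - H(x)}{r_B}.
\]
Here it is essential that $r_B \neq 0$, which the OLE tuple guarantees, so that the division is well defined over $\mathbb{F}_Q$ and the difference $d - r_A = (H(y) - H(x))/r_B$ vanishes exactly when $H(y) = H(x)$. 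Thus $d = r_A$ if and only if $H(x) = H(y)$.

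To finish, I would combine this equivalence with the assumption of no hash collisions, treating the two directions separately. In the forward direction, if $x = y$ then trivially $H(x) = H(y)$, so $d = r_A$ and Alice outputs ``match''. In the converse direction, if $x \neq y$ then collision-freeness yields $H(x) \neq H(y)$, hence $d \neq r_A$ and Alice does not output ``match''; equivalently, whenever Alice outputs ``match'' we have $H(x) = H(y)$ and therefore $x = y$.

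Since the argument is a one-line substitution followed by a case split, there is no genuine combinatorial obstacle; the only points requiring care are bookkeeping ones. I would make sure to invoke $r_B \neq 0$ explicitly so that the quotient is defined and the equivalence $d = r_A \iff H(x) = H(y)$ is tight, and to use the no-collision hypothesis only where it is actually needed — converting hash equality back to element equality — since the forward implication $x = y \Rightarrow H(x) = H(y)$ holds for any function $H$.
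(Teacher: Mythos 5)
Your proposal is correct and follows essentially the same route as the paper: substitute $c$ into Bob's computation, apply the OLE correlation $r_A r_B = s_A + s_B$, and use $r_B \neq 0$ to conclude $d = r_A \iff H(x) = H(y)$ (the paper clears the denominator and subtracts rather than splitting the fraction, a purely cosmetic difference). Your explicit invocation of the no-collision hypothesis for the converse direction is a small bookkeeping step the paper leaves implicit in the theorem statement.
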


\begin{proof}
We can substitute $c$ in Equation~\ref{eq:compare} and obtain:
\begin{equation}
\label{eq:comp}
d = (s_A - H(x) + H(y) + s_B) / r_B
\end{equation}

We show perfect correctness, i.e., if $x = y$, then $d = r_A$ and if $x \neq y$, then $d \neq r_A$.
We can reformulate Equation~\ref{eq:comp} as
\begin{equation*}
\label{eq:chi}
d r_B = s_A - H(x) + H(y) + s_B
\end{equation*}
and subtract Equation~\ref{eq:mbt}
\begin{equation*}
(d - r_A) r_B = H(y) - H(x)
\end{equation*}
Since $r_B \in \mathbb{F}^*_{Q} \neq 0$, it follows
\begin{equation*}
d - r_A = 0 \Leftrightarrow H(y) - H(x) = 0
\end{equation*}
and then
\begin{equation*}
d = r_A \Leftrightarrow H(x) = H(y)
\end{equation*}\end{proof}

\begin{theorem}
\label{thm:secure}
Our comparison protocol is {\em secure}, i.e., there exists a simulator of Bob's and Alice's view.
\end{theorem}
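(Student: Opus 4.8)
The plan is to construct explicit simulators $\mathsf{Sim}_A$ and $\mathsf{Sim}_B$ that reproduce each party's view from only their own input and output, relying on the fact that the OLE tuple supplies fresh, uniformly distributed masking randomness. The protocol has exactly one message in each direction: Alice sends $c = s_A - H(x)$ and Bob replies with $d$. Alice's view consists of her input $x$, her tuple share $(r_A, s_A)$, and the received value $d$; Bob's view consists of his input $y$, his share $(r_B, s_B)$, and the received value $c$. Since the tuple shares are part of each party's view, the simulators are free to sample them from the correct marginal distribution (uniform, with $r_B \in \mathbb{F}^*_Q$) and need only argue that the \emph{cross} message is correctly distributed given the locally held values and the output.

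First I would handle Bob's view, which I expect to be the easier case. Bob receives $c = s_A - H(x)$. Because $s_A$ is uniform in $\mathbb{F}_Q$ and known only to Alice, the message $c$ is a one-time-pad encryption of $H(x)$ under the independent uniform mask $s_A$; from Bob's perspective $c$ is uniformly distributed in $\mathbb{F}_Q$ and independent of everything else in his view. Thus $\mathsf{Sim}_B(y,\psi)$ samples $(r_B, s_B)$ from the tuple marginal and samples $\tilde c \leftarrow \mathbb{F}_Q$ uniformly, outputting $(y, r_B, s_B, \tilde c)$. I would argue this matches $\mathsf{View}^\pi_B$ exactly, noting that Bob's output is fully determined by his local computation in Equation~\ref{eq:compare} and carries no additional constraint on $c$.

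The main obstacle is Alice's view, because the value $d$ she receives is \emph{not} independent of her output: by Theorem~\ref{thm:correct}, $d = r_A$ precisely when $x = y$, so the simulator must use $\chi$ to decide whether to set $d = r_A$ or to pick a value $\neq r_A$. The subtlety is arguing the correct conditional distribution of $d$ when $x \neq y$. From Equation~\ref{eq:comp} we have $d = r_A + (H(y) - H(x))/r_B$; conditioned on Alice's view-so-far (her share $(r_A, s_A)$ and the absence of a match), the quantity $H(y)$ and the mask $r_B$ are unknown to Alice, and since $r_B$ ranges over $\mathbb{F}^*_Q$ while $H(y)-H(x)$ is a fixed nonzero element, the product $(H(y)-H(x))/r_B$ is uniformly distributed over $\mathbb{F}^*_Q$. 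Hence $d$ is uniform over $\mathbb{F}_Q \setminus \{r_A\}$. I would therefore define $\mathsf{Sim}_A(x,\chi)$ to sample $(r_A, s_A)$ from the marginal, then set $\tilde d = r_A$ if $\chi$ indicates a match and otherwise sample $\tilde d \leftarrow \mathbb{F}_Q \setminus \{r_A\}$ uniformly, outputting $(x, r_A, s_A, \tilde d)$.

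Finally I would assemble the two cases into a single indistinguishability argument, observing that each simulated transcript is \emph{identically} distributed to the real view (the argument is information-theoretic, not merely computational, so equality rather than indistinguishability holds), which establishes the two equalities required by Definition~1. The one point deserving care is whether the distribution of $d$ in the non-matching branch genuinely depends only on $\chi$ and not on the specific value of $H(x)-H(y)$; the uniformity-of-$r_B$ argument above shows it does not, which is why $\mathsf{Sim}_A$ can succeed knowing only the binary output and not Bob's input.
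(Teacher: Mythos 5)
Your proposal is correct and follows essentially the same route as the paper: Bob's view is simulated by observing that $c$ is one-time-padded by the uniform $s_A$, and Alice's view is simulated by writing $d = r_A + (H(y)-H(x))/r_B$ and using the uniformity of $r_B$ over $\mathbb{F}^*_Q$ to conclude that $d$ is uniform on $\mathbb{F}_Q\setminus\{r_A\}$ in the non-matching case. The only detail the paper adds that you omit is a footnote handling the corner case $r_A=0$ (where Alice can infer $s_B$ but still not $r_B$), which your argument already covers implicitly since it relies only on $r_B$.
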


\begin{proof}
{\em Bob's simulator}:
Since $r_A$ and $s_A$ are unknown to Bob, and he only learns Equation~\ref{eq:mbt}, $s_A$ is uniformly distributed in $\mathbb{F}_Q$ for Bob.
Consequently, $c$ is uniformly distributed in $\mathbb{F}_Q$ for Bob and the simulator can output a uniformly chosen random number.

{\em Alice's simulator}:
If $x = y$, then the simulator outputs $d = r_A$.
We have shown perfect correctness in Theorem~\ref{thm:correct} and hence only need to deal with the case $d \neq r_A$ ($x \neq y$).

We show that since the set $\{r_B, s_B\}$ is unknown to Alice\footnote{Note that if $r_A=0$, Alice knows $s_B = -s_A$, but $r_B$ is still unknown to Alice which suffices for the proof.}, $d$ is uniformly distributed in $\mathbb{F}_Q\setminus{}\{r_A\}$ for Alice, if $x \neq y$.


Let
\begin{equation}
\label{eq:mu}
\mu = H(x) - H(y)
\end{equation}
Since $x \neq y$ (and we assume no collisions), it holds that $\mu \in \mathbb{F}^*_Q$.
Substituting Equation~\ref{eq:mu} into Equation \ref{eq:comp} we get
\begin{equation}
\label{eq:sub1}
d = (s_A + s_B + \mu) / r_B
\end{equation}
Solving Equation~\ref{eq:mbt} for $r_A$, subtracting it from Equation~\ref{eq:sub1}, and solving for $d$ we get
\begin{equation}
\label{eq:sub2}
d = \mu / r_B + r_A
\end{equation}
The simulator needs to output $d$, but is given $r_A$ as input.
$\mu$ is in $\mathbb{F}^*_Q$ and $r_B$ is uniformly distributed in $\mathbb{F}^*_Q$.
$\mathbb{F}^*_Q$ forms the multiplicative sub-group in $\mathbb{F}_Q$.
Hence, $\mu / r_B$ is uniformly distributed in $\mathbb{F}^*_Q$.
The simulator outputs the sum of a uniform random number $v$ in $\mathbb{F}^*_Q$ and $r_A$.
Since $\mu / r_B \neq 0$, it holds $d \neq r_A$.
In summary, the simulator chooses a uniform random number in $\mathbb{F}_Q\setminus{}\{r_A\}$.
\end{proof}

{\noindent{\bf OPPRFs.}} An oblivious {\em programmable} pseudo-random function (OPPRF) \cite{kkrt,opprf} is an oblivious pseudo-random function (OPRF) that can be programmed for a number of input values.
Interestingly, our proofs show that our comparison protocol can be interpreted as an OPPRF that can be programmed for a {\em single} input.
However, our comparison protocol is much faster than OPPRFs for multiple inputs \cite{kkrt,opprf} which explains its advantage over the state-of-the-art even when the comparison protocol is repeated multiple times.

\subsection{Full PSI Protocol}
\label{sec:protocol}

In the full PSI protocol, Alice and Bob want to compare elements in the sets $\mathbb{X} = \{ x_i \}$ held by Alice and $\mathbb{Y} = \{ y_j \}$ held by Bob.
In order not to compare each pair of elements $(x_i,y_j)$, we use a well-known technique of hashing elements to bins and only comparing elements within each bin.
Alice hashes her elements $x_i \in \mathbb{X}$ using cuckoo hashing with $k$ hash functions and hashes into $\alpha$ bins.
Bob uses regular hashing for his set $\mathbb{Y}$ with each of the $k$ hash functions into the same hash table~\cite{psiot,phasing,opprf,freedman}.

\subsubsection{Alice: Cuckoo Hashing}
Cuckoo hashing \cite{cuckoo} is a hashing technique that reduces the maximum number of elements per bin to $1$.
Cuckoo hashing uses $\alpha = O(n)$ bins and $k > 1$ hash functions $h_i$.
The idea of cuckoo hashing is to iterate over the $k$ hash functions for each element.
First, an element $x$ is inserted into bin $h_1(x)$.
However, if this bin is already occupied by element $y$, as $h_1(x)=h_i(y)$, then $x$ replaces $y$ in that bin, and $y$ will be inserted into bin $h_{i+1\bmod k}(y)$ and so on. 
There is a chance of an infinite loop of replacements, so the algorithm stops after a logarithmic number of replacements and places the current remaining element on a small stash data structure.
For our protocol, the size $s$ of the stash needs to be fixed as well.
Let  $\lambda$ be a statistical security parameter.
We choose $s$, such that the probability of failure, i.e., exceeding the stash size, is less than $2^{-\lambda}$.
In our experiments, we set $\lambda = 40$.
We use the parameter sets of Pinkas et al.~\cite{psiot} which we evaluate for our protocol in Section~\ref{sec:communication}.
In our optimal setting selected in Section~\ref{sec:communication} a stash is not even necessary.

Note that, in order to not leak information about her set, Alice pads empty bins with dummy elements, such that the total number of elements per bin is always $1$.

\subsubsection{Bob: Regular hashing}
Bob uses regular hashing to map his input into a regular hash table of length $\alpha$.
Since Alice may use any of the $k$ hash functions to bin element $x$, Bob has to use all $k$ hash functions for his input $y$, guaranteeing that $x$ and $y$ will be in a common bin, if $x = y$.
Inserting each element into (up to) $k$ bins in Bob's table, increases the total number $\beta = O(k \log n / \log \log n)$ \cite{gonnet} of elements per bin.
$\beta$ is chosen such that the probability that Bob will exceed $\beta$ in any hashtable bin is negligible in the statistical security parameter $\lambda$, i.e., smaller than $2^{\lambda}$.
Note that $\beta$ does not depend on Bob's input.
To not leak any information about Bob's set, he also has to pad all bins to $\beta$ elements.
Observe that $\beta$ is crucial for performance, as Alice and Bob have to perform $\beta$ comparisons for each of Alice's elements.
Hence, reducing $\beta$ significantly reduces the number of necessary comparisons (including those with dummy elements) and hence increases the overall performance of the protocol.
We use the parameter sets of Pinkas et al.~\cite{phasing} for $\beta$ which we evaluate for our protocol in Section~\ref{sec:communication}.

\subsubsection{Alice and Bob: Comparisons}

\begin{algorithm}[!t]
\caption{Our PSI protocol}
\label{alg:protocol}
\begin{algorithmic}
\STATE Common Input: A set $H$ of $k$ hash functions
\STATE Input {\em Alice}: $\mathbb{X}$, OLE tuples $\{ \{ r_{A,i,j} \}, s_{A,i} \}$ ($i\in[\alpha]$, $j\in[\beta]$)
\STATE Input {\em Bob}: $\mathbb{Y}$, OLE tuples $\{ \{ r_{B,j,i}, s_{B,j,i} \}\}$
\STATE Output {\em Alice}: $\mathbb{X} \cap \mathbb{Y}$
\STATE \hspace{5pt}
\STATE {\em Alice}:
\STATE $\alpha \gets (1 + \epsilon) \cdot |\mathbb{X}|$
\STATE $T_A \gets \{\mathsf{dummy}\}^\alpha$
\FORALL{$x_i \in \mathbb{X}$}
  \STATE Permutation-based cuckoo hash $x_i$ into $T_A$
\ENDFOR
\FORALL{$t_i \in T_A$}
  \STATE Send $i, c_i \gets s_{A,i} - t_i$ to Bob
\ENDFOR
\STATE \hspace{2pt}
\STATE {\em Bob}:
\STATE $T_B \gets \{ \{\mathsf{dummy}\}^\beta \}^\alpha$
\FORALL{$y_i \in \mathbb{Y}$}
  \FORALL{$h_j \in H$}
    \STATE Add permutation-based suffix of $y_i$ to $T_B$ using $h_j$
  \ENDFOR
\ENDFOR
\FORALL{$T_i \in T_B$}
  \FORALL{$t_{i,j} \in T_i$}
    \STATE Send $i, j, d_{i,j} \gets (c_i + t_{i,j} + s_{B,j,i}) / r_{B,j,i}$ to Alice
  \ENDFOR
\ENDFOR
\STATE \hspace{2pt}
\STATE{\em Alice}:
\FORALL{$d_{i,j}$}
  \IF{$d_{i,j} = r_{A,j,i}$}
    \STATE Output the $x$-value hashed into bin $i$
  \ENDIF
\ENDFOR

\end{algorithmic}
\end{algorithm}

Alice and Bob now need to run $\alpha \cdot \beta$ comparisons, i.e., $\beta$ comparisons for each of the $\alpha$ bins.
With a stash ($s > 0$), Alice and Bob  also compare each element in the stash with each element of $Bob$, i.e., $s \cdot n$ additional comparisons.
Each comparison uses the protocol from Section~\ref{sec:comparison}.
Note that all comparisons can be performed in parallel, and for each element in $\mathbb{X}$ or $\mathbb{Y}$, there can be at most one match.

While this already concludes the description of the main steps of the
full PSI protocol, we also institute the following crucial
optimizations.

\subsubsection{Permutation-Based Hashing} The comparison protocol operates over field $\mathbb{F}_{Q}$ and the up to $2n$ elements (from the union of $\mathbb{X}$ and $\mathbb{Y}$) need to be mapped to $\mathbb{F}_{Q}$, ideally without collision.
Hence, the size of $\mathbb{F}_{Q}$ is determined by the domain of $\mathbb{X}$ and $\mathbb{Y}$.
However, the size of $\mathbb{F}_{Q}$ is also linear in our communication complexity and any reduction again significantly reduces the overall communication cost.
Thus, Alice uses  permutation-based hashing \cite{permutationhash}, similar to the Phasing protocol \cite{phasing}, to further reduce the size of $\mathbb{F}_Q$.

Permutation-based hashing works as follows.
Let $x$ be an $\sigma$-bit string which consists of a prefix $x_1$ of $\sigma_1$ bits and a suffix $x_2$ of $\sigma_2$ bits, such that the concatenation of $x_1$ and $x_2$ is $x$.
Let $h$ be a hash function that maps prefixes to bins.
Element $x$ is inserted into bin $h(x_2) \oplus x_1$.
This ensures that if two different elements $x \neq y$ have a common suffix, i.e., $x_2 = y_2$ and consequentially $h(x_2) = h(y_2)$, they will be mapped to different bins, since then it must be that $x_1 \neq x_2$.
Hence, it suffices to compare $x_2$ and $y_2$ for each bin to determine equality of $x$ and $y$.
However, since we use cuckoo hashing, we need to ensure that the same hash function $h_i$ is used.
Otherwise, two different elements $x \neq y$ with the same suffix $x_2 = y_2$ could be mapped to the same bin $h_i(x_2) \oplus x_1 = h_j(y_2) \oplus y_1$, but have different prefixes $x_1 \neq y_1$ since $h_i \neq h_j$.
We ensure this by using a field size of $\mathbb{F}_Q$ that is larger than $k 2^{\sigma_2}$ and encoding suffixes into different ranges depending on the hash function, such that suffixes can only match if they use the same hash function.
When using $k$ hash functions, our technique increases the message length by $\log_2(k)$ bits.  
Permutation based hashing saves $\log_2(\alpha)$ bits.  
Hence, the net effect is $\log_2(\alpha) - \log_2(k)$ which is quite large for the parameters in Table~\ref{tab:para}.

\subsubsection{Communication Optimization for OLE Tuples}
\label{sec:commopti}
Using the above hashing techniques, Alice compares each of her (dummy and input) elements with $\beta$ (dummy or input) elements from Bob.
This allows for another optimization of Alice's (and thus the total) communication complexity.

Instead of using a new $s_A$ for each comparison, Alice re-uses one
$s_A$ for all $\beta$ comparisons of one $x$, but uses different
$r_A$.  Bob never re-uses a pair $(s_B, r_B)$. So, we batch one $s_A$
with $\beta$-many $r_A$, $s_B$, and $r_B$.

An \emph{optimized} OLE tuple is tuple $(s_A, (s_{B,1},r_{A,1},r_{B,1}),\ldots, (s_{B,\beta}$, $r_{A,\beta},r_{B,\beta}))$, where 
\begin{equation}
\label{eq:smbt}
\forall i \in \{ 1, \ldots, \beta \}: r_{A,i} r_{B,i} = s_A + s_{B,i}.
\end{equation}

For the online phase of the PSI protocol, we assume that the offline phase has generated a sufficient number of optimized OLE tuples by choosing $s_A,s_{B,i}\in\mathbb{F}_Q,r_{B,i}\in\mathbb{F}^*_Q$ randomly and $r_{A,i}\in\mathbb{F}_Q$ as correlated randomness such that Equation~\ref{eq:smbt} holds.
For each tuple the $(s_A,r_{A,1},\ldots,r_{A,\beta})$ are distributed to Alice and the $((s_{B,1},r_{B,1}),\ldots,(s_{B,\beta},r_{B,\beta}))$ to Bob.

In the remainder of the paper, we will only consider these optimized OLE tuples and show how to construct large amounts of such tuples in Section~\ref{sec:precomputation}.

The computation of these sets is also more efficient than the computation of all distinct tuples.
Overall, this reduces the number of messages sent by Alice in the online phase from $\alpha \cdot \beta$ to $\alpha$.

We provide a proof of security for this optimization.
Let $x$ be an element held by Alice and Bob hold a set of elements, such that Alice compares her element against each element in Bob's set as described.
We call this a set comparison protocol.

\begin{theorem}
Our set comparison protocol is {\em secure}, i.e., there exists a simulator of Bob's and Alice's view.
\end{theorem}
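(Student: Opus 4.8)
The plan is to build directly on the single-comparison argument of Theorem~\ref{thm:secure} and reduce the whole claim to showing that re-using one common $s_A$ across the $\beta$ comparisons introduces no detectable correlation in the messages Alice receives.

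For Bob's simulator the situation is unchanged from Theorem~\ref{thm:secure}: Bob's only incoming message is $c = s_A - H(x)$, and since $s_A$ is uniform in $\mathbb{F}_Q$ and independent of everything Bob holds, $c$ is uniform in $\mathbb{F}_Q$. Hence $\mathsf{Sim}_B$ simply samples $c$ uniformly, exactly as before.

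For Alice's simulator I would first observe that Alice's share of an optimized OLE tuple, $(s_A, r_{A,1}, \ldots, r_{A,\beta})$, is distributed uniformly over $\mathbb{F}_Q^{\beta+1}$: given uniform $s_A$, uniform $s_{B,i} \in \mathbb{F}_Q$ and uniform $r_{B,i} \in \mathbb{F}^*_Q$, the correlated value $r_{A,i} = (s_A + s_{B,i}) / r_{B,i}$ from Equation~\ref{eq:smbt} is uniform in $\mathbb{F}_Q$ and independent across $i$. So $\mathsf{Sim}_A$ first samples this share uniformly, and then produces the $\beta$ return messages from the match set contained in Alice's output: for each index $i$ with $x = y_i$ it outputs $d_i = r_{A,i}$, and for each $i$ with $x \neq y_i$ it outputs an independent uniform value in $\mathbb{F}_Q \setminus \{r_{A,i}\}$.

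The crux, and the step I expect to be the main obstacle, is arguing that this simulated \emph{joint} distribution matches the real one, i.e.\ that the $d_i$ are mutually independent given Alice's view even though they all depend on the single shared $s_A$. Here I would substitute Equation~\ref{eq:smbt} into $d_i = (c + H(y_i) + s_{B,i}) / r_{B,i}$ for each $i$, mirroring the derivation of Equation~\ref{eq:sub2} in Theorem~\ref{thm:secure}, to obtain $d_i = r_{A,i} - \mu_i / r_{B,i}$ with $\mu_i = H(x) - H(y_i)$. The decisive observation is that $s_A$ cancels out of the offset $d_i - r_{A,i} = -\mu_i / r_{B,i}$. Conditioned on Alice's view $(s_A, r_{A,1}, \ldots, r_{A,\beta})$, each $r_{B,i}$ remains uniform in $\mathbb{F}^*_Q$, since knowing $s_A$ and $r_{A,i}$ pins down only the product $r_{A,i} r_{B,i} = s_A + s_{B,i}$ while $s_{B,i}$ stays unknown, and, crucially, the $r_{B,i}$ are drawn from independent fresh randomness and are therefore mutually independent. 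For a matching index $\mu_i = 0$ forces $d_i = r_{A,i}$; for a non-matching index $\mu_i \in \mathbb{F}^*_Q$, so $-\mu_i / r_{B,i}$ is uniform over $\mathbb{F}^*_Q$ and $d_i$ is uniform over $\mathbb{F}_Q \setminus \{r_{A,i}\}$. Since the offsets are governed by the independent $r_{B,i}$, the non-matching $d_i$ are jointly independent, so the product distribution emitted by $\mathsf{Sim}_A$ coincides with Alice's real view. This reduces the entire statement to the per-index claim already established in Theorem~\ref{thm:secure} together with the independence of the $r_{B,i}$.
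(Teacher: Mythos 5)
Your proposal is correct and follows essentially the same route as the paper: both reduce to the per-index argument of Theorem~\ref{thm:secure} and invoke the independence of the $r_{B,j}$ (each uniform in $\mathbb{F}^*_Q$) to conclude that the non-matching $d_j$ are independently uniform in $\mathbb{F}_Q\setminus\{r_{A,j}\}$. Your write-up merely makes explicit the joint-independence and conditioning-on-Alice's-view steps that the paper compresses into the single sentence ``Since all $r_{B,j}$ are independently uniformly distributed\ldots the same derivation holds,'' which is a welcome clarification but not a different proof.
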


\begin{proof}

{\em Bob's simulator}:
Same as in Theorem~\ref{thm:secure}.

{\em Alice's simulator}:
The case $x = y_j$ is handled as above.
We consider the simulation for each element $y_j \neq x$ in Bob's set.
Let
\begin{equation*}
\mu_{j} = H(x) - H(y_{j})
\end{equation*}

Equation~\ref{eq:sub2} is modified to
\begin{equation*}
d_j = \mu_{j} / r_{B, j} + r_{A, j}
\end{equation*}
Since all $r_{B, j}$ are independently uniformly distributed in $\mathbb{F}^*_Q$, the same derivation as in the proof of Theorem~\ref{thm:secure} holds.
The simulator chooses $j$ uniform random numbers in $\mathbb{F}_Q\setminus{}\{r_A\}$ if $x \neq y_j$.
\end{proof}

{\noindent{\bf Formal Presentation.}} We formalize our full PSI protocol, including the above optimizations and assuming that there is no stash, in Algorithm~\ref{alg:protocol}.
It is a sequential composition of set comparison protocols, one for each bin in the hash table.

\section{Offline Phase (OLE Tuple Precomputation)}
\label{sec:precomputation}
We now turn to the various options to realize the offline phase for securely computing optimized (set) OLE tuples from Section~\ref{sec:commopti}.
These are (sometimes small) adaptations of existing state-of-the-art techniques that complete our comparison protocol into a full-fledged PSI protocol.

We consider purely cryptographic, hardware-supported, and trusted third-based approaches for the implementations of the offline phase.
For fair comparison, we note that the related work with which we compare our performance \cite{cm,kkrt,spot,vole,crypto21} only uses purely cryptographic assumptions.
However, other related works have considered our assumptions as well to speed up their (related) protocols.
We start with an overview.

\subsubsection{TTP}
The first option we consider is that of a trusted third party (TTP) providing OLE tuples
similar to Beaver triples as a service.  The implementation of Beaver
multiplication triples as a service by a TTP (``dealer
model'') to speed up their computation has been proposed multiple
times in the literature, see, e.g.,
\cite{tripleservice1,tripleservice2,hastings} for an overview.  Also,
some state-of-the-art PSI protocols assume trusted third parties.  For
example, the approach by \citet{spot} assumes the following trust
model: each party outsources their data to a cloud provider and then
runs the protocol.  Sometimes they even use the same cloud provider.
Clearly, this cloud provider is a real-world trusted third party,
since it has access to both parties' data and could completely subvert
the security of the protocol.  However, \citeauthor{spot}, in our
opinion rightly, assume that the PSI protocol adds a layer of
security, since the data is not revealed to the other party hosted by
the cloud provider.  Yet, given such a trusted cloud provider, it is
easy to extend the model, such that the cloud provider also creates
the output of the offline phase (as our OLE tuples).  This will
significantly reduce the cost of our offline phase and hence our
entire protocol significantly below the cost of \citeauthor{spot}'s
protocol without changing the trust assumptions.

\subsubsection{Trusted Execution Environments} 
Secure hardware-supported implementations of the offline phase are also possible.
For example, Microsoft's Azure cloud offers trusted execution environments (TEE) such as Intel's SGX.
Microsoft's Azure cloud's TEEs are also used to privately intersect the set of client's address book with the set of the Signal user base~\cite{signal}.
Using trusted hardware (and trusted software to run in the hardware), we will show that the cost of our offline phase is also very low.
For fair comparison, we emphasize that this is an additional security assumption not made by the protocols with which we compare our performance~\cite{cm,kkrt,spot,vole,crypto21}.
However, the main advantage of our protocol stems from the shorter online time which does not make additional security assumptions.
Hence, we still consider this a fair comparison and Signal's use of TEEs underpins the acceptability of this assumption.
The advantage of only running the offline phase in the TEE instead of the entire PSI protocol is that the trusted software (which is difficult to produce and verify) can be re-used across many secure computation protocols, such as SPDZ, and the memory requirement is small opposed to the large data set sizes for PSI that may need to be held in memory and randomly accessed in a hash table.
We need a TEE at both parties, Alice and Bob, for maximum efficiency which is different from a single trusted third party.

\subsubsection{Cryptographic Protocols} We will also describe two protocols for securely pre-computing OLE tuples using only cryptographic assumptions (lattice-based homomorphic encryption and OT). 
These are variations of previous work on product sharing.
If a party is not willing to accept additional security assumptions during the offline phase (only), they would these offline phases and compare them as part of the whole protocol with related work.
We emphasize that our performance gain during the online phase is unaffected by any performance loss during the combined online and offline phase.
First, observe the relation between OLE tuples and Beaver's multiplication triples.
In Beaver triples, equation $c=a\cdot b$ holds with $a=a_0+a_1$, $b=b_0+b_1$, $c=c_0+c_1$, and one party holds $(a_0,b_0,c_0)$, and the other holds $(a_1,b_1,c_1)$. 
As $(c_0+c_1)=(a_0+a_1)\cdot{}(b_0+b_1)=a_0b_0+a_0b_1+a_1b_0+a_1b_1$, the only information parties have to interactively compute is $a_0b_1$ and $a_1b_0$. 
That is, the two parties compute two product sharings~\cite{product-sharing}, where the factors stem from the parties, and the products are secret shared between the parties. 
The computation of product shares/OLE lies at the heart of offline phases to compute Beaver multiplication triples.
Highly practical MPC systems typically use one of two approaches to realize product sharing: lattice-based, somewhat-homomorphic encryption (e.g., SPDZ~\cite{spdz} and Overdrive~\cite{overdrive}) or Oblivious Transfer (e.g., MASCOT~\cite{mascot}). Both approaches were initially mentioned by \citet{gilboa}.
As the cryptographic protocols for product sharing and consequently Beaver triples in today's general MPC frameworks target (expensive) malicious security, they are significantly slower than lattice-based encryption and OT implementations that we will describe in the following.

All protocols we describe in detail below have computational
complexity linear in the number of tuples. The protocol using TEEs
features constant communication complexity, while the three others
have linear communication complexity.

{\noindent{\bf Discussion.}} Improvements in the underlying, adapted
cryptographic protocols and techniques we build on can further enhance
the performance of our offline phase.
Recent advances in pseudo-random correlation generators (PRCG)~\cite{boyle1} theoretically enable the generation of OLE tuples and even Beaver's multiplication triples~\cite{beaver} with very little communication effort.
VOLE-PSI~\cite{vole} uses a PRCG for vector-OLE (VOLE), and an implementation of vector-OLE is available~\cite{vector-ole}.
However, vector-OLE, as used in VOLE-PSI~\cite{vole}, does not apply to the OLE tuples we require. 
Adapting existing implementations of vector-OLE to our needs leads to inferior performance than our implementations described in Section~\ref{sec:precomputation}.
Hence, we delay the development of improved PRCG techniques for the offline phase to future work.

We now present details of our four protocols for the offline phase.

\subsection{Third Trusted Party (Dealer)}
\label{sec:thirdparty}

A naive implementation of the TTP would create the entire OLE tuples at the TTP and then distribute their respective pairs $(s_A,r_{A,1},\ldots,r_{A,\beta})$ to Alice and $((r_{B,1},s_{B,1}),\ldots,(r_{B,\beta},s_{B,\beta}))$ to Bob.  
However, we found the following implementation using cryptographic assumptions to be more efficient even in current high-speed networks.
The TTP chooses two random seeds $R_A$ and $R_B$.
TTP uses $R_A$ as a seed in a pseudo-random number generator (PRG) to generate $s_A$ and uses $R_B$ to generate $\beta$ (for each $s_A$) $s_{B,i}$s and $r_{B,i}$s.
TTP then computes $r_{A,i} = (s_A + s_{B,i}) / r_{B,i}$ and sends $R_A$ and all $r_{A,i}$ to Alice and $R_B$ to Bob.
Alice and Bob use the same PRG to generate $s_A$, $s_{B,i}$, and $r_{B,i}$, respectively. This process repeats for all OLE tuples required.

\subsection{Trusted Execution Environments}
\label{sec:teeoffline}
Some cloud providers, such as Microsoft's Azure, offer the use of Trusted Execution Environments (TEE) such as Intel's SGX.

A straightforward approach for PSI would be that Alice and Bob compute the entire PSI protocol in a TEE.
However, this has a number of disadvantages.
Programs to be executed in a TEE need to be hardened against side channel attacks and be free of any software vulnerability, such that the party controlling the host computer cannot infer the computation done in the TEE.
This has significant costs for both parties -- one developing the program and the other verifying the correct development (including compilation).
Furthermore, in current TEEs, e.g., Intel's SGX, the encrypted memory for the enclave is small (96~MBytes).
Such a small memory size requires memory paging even for the moderately-sized data sets we consider, and paging quickly becomes a performance bottleneck, particularly under random access in a hash table~\cite{scone}.
Finally, splitting the PSI program into a secure part that runs in the TEE and an insecure part that has efficient memory access comes with significant implementation challenges.
One would have to authenticate calls from the insecure part to the secure part which may require additional security assumptions.

\ignore{
Even without authentication of calls the performance of an insecure version that splits the PSI program can be disappointing.

Consider the following strawman construction.
It computes messages authentication codes (MAC) using a key securely agreed between the enclaves at each party inside those enclaves and performs the intersection on the MACs in the unsecure parts.
This construction is $29\%$ to $109\%$ slower in our cloud settings for data set sizes between $2^{20}$ and $2^{26}$ than our protocol with an offline phase in SGX even when the MACs are only sent from Bob to Alice.
We attribute this disappointing performance to the computational cost of the MACs which is much higher than our simple field operations and generation of pseudo-random numbers.
}

Instead, we suggest to only generate OLE tuples in the TEE. Using the generation of OLE tuples as the program that is to be securely developed, verified, and deployed to the TEE has several advantages.
First, a program for OLE tuple generation is very small and hence easier to harden.
Second, it can be potentially be re-used among several applications using pre-computation of correlated randomness, such that its development costs amortize.
Finally, OLE tuples can be sampled and communicated to untrusted program with constant memory requirements.


A naive implementation using TEEs would use one enclave and use that to generate all tuples and distribute them securely over an encrypted channel (e.g.,~TLS).
However, this incurs significant communication costs, since all tuples of one party need to be sent from the TEE's host computer to that party.
A more communication-efficient implementation uses two TEEs, one at Alice's site and one at Bob's.
Both, Alice and Bob load the same program into their TEE which contains a public key of each party for authentication.
Each party locally attests its deployment and verifies  the remote deployment at the other party's TEE, such that the integrity of the programs at start is assured.
The programs then jointly choose a random seed.
Using that seed, they secretly generate the same OLE tuples using a pseudo-random number generator (PRG).
Each party locally authenticates to its TEE and the TEE releases the pairs of the OLE tuples for that party.
This implementation has only small, constant communication cost for the establishment of the joint seed.

\subsection{Lattice-Based Homomorphic Encryption (LBE)}
\label{sec:somewhat}

We can precompute the OLE tuples using LBE.
Alice chooses $m$ keys over plaintext fields $\mathbb{F}_{q_i},0 < i \leq m$.
If our prime $Q$ (as used in the comparison protocol) is a possible plaintext size for the LBE scheme, we set $m=1$ and $q_1 = Q$.
If prime $Q$ is larger than the possible plaintext sizes for a chosen security parameter, then we set $m > 1$ and for a statistical security parameter $\lambda$ we choose
$$Q' = \prod_{i = 1}^{m} q_i > Q^2 2^{\lambda}$$

Alice chooses a uniform $s_A \in \mathbb{F}_Q$.
Let $E_i()$ denote the homomorphic encryption under the $i$-th key (with plaintext modulus $q_i$).
Alice encrypts as 
$$c_i = E_i(s_A \bmod q_i)$$ 
for $1 \leq i \leq m$ and sends all $c_i$ to Bob.
Bob chooses uniform $s_B \in \mathbb{F}_Q$ and $r_B \in \mathbb{F}^*_Q$.
If $m > 1$, Bob uniformly chooses $u$ in $\mathbb{F}_{2^{\lambda}}$, else Bob sets $u=0$.
Then he computes 
$$d_i = (c_i + E_i(s_B)) ((r_B^{-1} \bmod Q) \bmod q_i) + E_i(uQ \bmod q_i)$$
and sends all $d_i$ to Alice.
Alice decrypts $d_i$ and if $m > 1$, Alice uses the Chinese Remainder Theorem to recover $v = uQ + (s_A + s_B) / r_B \in \mathbb{F}_{Q'}$.
Alice sets $r_A = v \bmod Q$.

We can re-use the optimization of the comparison protocol (see Section~\ref{sec:protocol}) and Alice sends one $s_A$, but Bob chooses $r_B$, $s_B$ and computes $r_A$ for each pair of comparison with the same element $x$.
This reduces the number of ciphertexts that Alice needs to send.
Furthermore, ciphertext operations can be easily batched over vectors of plaintexts, a common SIMD technique in LBE.

Homomorphically encrypted ciphertexts conceal Alice's inputs from Bob, but not vice-versa.
Specifically, the noise level in the ciphertexts after the protocol might reveal information about Bob's inputs to Alice. To prevent this, there exists techniques to provide \emph{circuit privacy}. In short, circuit privacy allows the function that Bob evaluates ($f(x) = ax +b$ in this case) to remain hidden from Alice. \emph{Noise flooding} is one technique to make the noise level of the ciphertext after the protocol statistically indistinguishable from the noise level a freshly encrypted ciphertext. Before communicating the results back to Alice, Bob uses such techniques to make the noise level of the ciphertexts statistically indistinguishable from fresh ciphertexts.

\begin{proposition}
The above OLE tuple precomputation protocol using LBE is {\em correct}, i.e., $r_A r_B = s_A + s_B$.
\end{proposition}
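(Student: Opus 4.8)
The plan is to reduce the whole proposition to the single modular identity $r_A \equiv (s_A + s_B)\,r_B^{-1} \pmod{Q}$: once this is in hand, multiplying through by $r_B$ (which is invertible, as $r_B \in \mathbb{F}^*_Q$) immediately gives $r_A r_B = s_A + s_B$ in $\mathbb{F}_Q$, i.e.\ Equation~\ref{eq:mbt}. So the entire argument amounts to tracking what $r_A = v \bmod Q$ actually equals. I would first dispatch the easy case $m=1$, where the plaintext modulus is $q_1 = Q$ and $u = 0$: by the additive and plaintext--scalar homomorphism of $E_1$, the returned ciphertext $d_1$ decrypts to $\bigl((s_A + s_B)\,(r_B^{-1}\bmod Q)\bigr) \bmod Q = (s_A+s_B)\,r_B^{-1} \bmod Q$, and since $r_A = d_1$ the claim is immediate.

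For the general case $m > 1$ I would argue slot by slot and then glue with the Chinese Remainder Theorem. Assuming only that the LBE scheme decrypts correctly and supports ciphertext addition, plaintext addition, and plaintext--ciphertext scalar multiplication over $\mathbb{Z}_{q_i}$, the definition of $d_i$ shows that it decrypts to
\[
v_i = \bigl[(s_A + s_B)\,(r_B^{-1}\bmod Q) + uQ\bigr] \bmod q_i .
\]
I would then introduce the single nonnegative integer $w = (s_A + s_B)\,(r_B^{-1}\bmod Q) + uQ$, computed over $\mathbb{Z}$, so that $v_i = w \bmod q_i$ for every $i$. Consequently the CRT reconstruction returns $v = w \bmod Q'$.

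The crucial step, and the one I expect to be the only genuinely delicate point, is to show that this reconstruction returns $w$ itself rather than a smaller residue, i.e.\ that $0 \le w < Q'$. Here I would use the ranges fixed by the protocol: $s_A + s_B < 2Q$, the representative $r_B^{-1}\bmod Q$ lies in $\{1,\dots,Q-1\}$, and $u < 2^{\lambda}$. These bounds yield $w < 2Q^2 + 2^{\lambda}Q \le Q^2 2^{\lambda} < Q'$, where the middle inequality is a one-line check valid whenever $\lambda \ge 2$ and $Q \ge 2$ (always the case; we use $\lambda = 40$) and the last is the defining inequality $Q' > Q^2 2^{\lambda}$. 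This is precisely the place where the parameter choice $Q' > Q^2 2^{\lambda}$ is consumed, and where one must be careful that the per-slot modular arithmetic is faithfully described by the single integer $w$; the term $uQ$ exists only to randomize the high-order part for privacy and does not disturb correctness.

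Finally I would reduce modulo $Q$. Since $uQ \equiv 0 \pmod{Q}$,
\[
r_A = v \bmod Q = w \bmod Q = \bigl((s_A+s_B)(r_B^{-1}\bmod Q)\bigr) \bmod Q = (s_A+s_B)\,r_B^{-1} \bmod Q ,
\]
which is exactly the reduced identity targeted at the outset, whence $r_A r_B = s_A + s_B$. I would close by noting that the noise-flooding/circuit-privacy discussion preceding the proposition is purely a privacy device and is irrelevant here: correctness rests only on decryption correctness of the underlying LBE scheme, which may be assumed as a black box.
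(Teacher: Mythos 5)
Your proof is correct, and it follows the only natural route: unwind the homomorphic operations slot by slot and glue with the CRT. The paper itself offers no argument beyond the single sentence ``This trivially follows from the construction of the protocol,'' so your writeup supplies exactly the details the authors omit --- in particular the range bound $w < 2Q^2 + 2^{\lambda}Q \le Q^2 2^{\lambda} < Q'$, which is the one genuinely non-trivial step and correctly explains why the parameter condition $Q' > Q^2 2^{\lambda}$ (with the factor $Q^2$ rather than $Q$) is needed, namely because the product $(s_A+s_B)\,(r_B^{-1}\bmod Q)$ is never reduced modulo $Q$ before reconstruction.
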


This trivially follows from the construction of the protocol.

\begin{theorem}
The above OLE tuple precomputation protocol using LBE is {\em secure}, i.e., there exists a simulator of Bob's and Alice's view.
\end{theorem}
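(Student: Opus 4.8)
The plan is to build one simulator per party and argue indistinguishability along the two asymmetric directions the protocol actually protects: semantic security of the LBE scheme toward Bob, and statistical masking (noise flooding together with the additive $uQ$ term) toward Alice. Note that this offline protocol is a correlated-randomness sampling procedure with no private inputs, so each simulator receives only that party's share of the sampled OLE tuple and must reproduce its view, which consists of the party's own coins together with the single message it receives. I would organize the proof exactly as in the style of Theorem~\ref{thm:secure}, giving \emph{Bob's simulator} and then \emph{Alice's simulator}.

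\emph{Bob's simulator.} Bob's view is his randomness (which fixes $s_B$, $r_B$, and $u$) together with Alice's message $(c_1,\dots,c_m)$, where $c_i = E_i(s_A \bmod q_i)$. Since Bob never holds the decryption keys, I would replace each $c_i$ by a fresh encryption $E_i(0)$ and argue that the real and simulated views are computationally indistinguishable. This reduces directly to the IND-CPA (semantic) security of the LBE scheme: a distinguisher between the two views yields, through a standard hybrid over the $m$ ciphertexts, a distinguisher against semantic security. No property of $s_A$ is used beyond the fact that it is never revealed in the clear, so the simulator needs only Bob's output $(s_B,r_B)$, as required by the definition in Section~\ref{sec:semihonest}.

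\emph{Alice's simulator.} This is the delicate direction and the main obstacle. Alice's view is her coins (fixing the keys and $s_A$) plus Bob's reply $(d_1,\dots,d_m)$; after decryption she obtains $v = uQ + (s_A+s_B)/r_B$, from which $r_A = v \bmod Q$ is her legitimate output and where $r_A r_B = s_A + s_B$ by Equation~\ref{eq:mbt}. I must show that the remainder of $v$, and the noise carried by the $d_i$, leak nothing about Bob's secrets $(s_B,r_B)$ beyond $r_A$. The noise is dispatched by circuit privacy: after the noise-flooding step described above, each $d_i$ is statistically close to a \emph{fresh} encryption of its plaintext, so I may let the simulator produce the $d_i$ by honest fresh encryption and reduce the problem to getting the plaintext distribution right. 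For the plaintext, the low part of $(s_A+s_B)/r_B$ is exactly $r_A$ (known to the simulator), while the residual high-order part is masked by $uQ$; because $u$ is drawn uniformly and $Q' > Q^2 2^{\lambda}$, the distribution of $v$ conditioned on $r_A$ is within statistical distance $2^{-\lambda}$ of $\tilde u Q + r_A$ for a fresh uniform $\tilde u$. The simulator therefore samples $\tilde u$ uniformly, forms the plaintext $\tilde u Q + r_A$, and encrypts it under Alice's keys.

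Putting the pieces together, Alice's simulated view differs from the real one only by the statistical noise-flooding gap and the $2^{-\lambda}$ masking gap, hence is statistically indistinguishable, while Bob's is computationally indistinguishable under IND-CPA; composing the two yields computational indistinguishability overall. The hardest step is the masking bound: pinning down that the high-order part of the reconstructed plaintext, which in principle depends on $s_B$ and $r_B$ through the unreduced integer product, is swamped by the uniform mask $uQ$ under the size bound on $Q'$. I would isolate this as a short statistical lemma bounding the distance between $\tilde u Q + r_A$ and the true conditional distribution of $v$ by $2^{-\lambda}$, and handle the $m=1$ case (where $u=0$ and $v=r_A$ lies entirely in the low part) separately as the degenerate case that is perfectly simulated once noise flooding is applied.
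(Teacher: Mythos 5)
Your proposal is correct and follows essentially the same route as the paper's proof: Bob's view is simulated by fresh ciphertexts under IND-CPA security, and Alice's view by sampling a uniform $u'$ and producing (fresh, by circuit privacy) encryptions of $r_A + u'Q$, with the $m=1$ case handled separately as a degenerate perfectly-simulated case. The only difference is that where the paper delegates the statistical-masking bound to \citet{thorbek}, you sketch the smudging lemma directly; this is a matter of detail, not of approach.
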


\begin{proof}
{\em Bob's simulator}:
Bob receives $m$ IND-CPA secure homomorphic ciphertexts.

{\em Alice's simulator}:
In case $m = 1$, Alice's messages can be simulated by the message $r_A$, i.e., part of the output of the protocol.
In case $m > 1$, Alice's messages can be simulated by a choosing random number $u' \in \mathbb{F}_{2^{\lambda}}$ and the message $r_A + u'Q$.
The statistical indistinguishability of these views has been proven by \citet{thorbek}.
Due to circuit privacy, the noise level of the ciphertexts are statistically indistinguishable from fresh ciphertexts.
\end{proof}

\subsection{Oblivious Transfer}
\label{sec:otoffline}
We use a second, alternative cryptographic approach to prepare OLE tuples that is based on \citeauthor{gilboa}'s product sharing~\cite{gilboa}. This is technique is implicitly used as a sub-routine in various general MPC frameworks, see, for example, \citet{mp-spdz} for an overview.

\paragraph{Intuition} To generate an OLE tuple, Alice chooses $r_A$ randomly from $\mathbb{F}_Q$, and Bob randomly chooses $r_B$ from $\mathbb{F}^*_{Q}$.
Let $\ell=\log{Q}$ and $r_B[i]$ be the $i$-th bit of $r_B$.
The main idea is to rewrite $r_A\cdot{}r_B$ using the multiplication formula $$r_A\cdot{}r_B=\sum^{\ell}_{i=1}r_B[i]\cdot{}r_A\cdot{}2^{i-1}.$$  
Alice and Bob run $\ell$ instances of 1-out-of-2 OT. 
In each instance, receiver Bob uses $r_B[i]$ as choice bit, and sender Alice inputs $0$ and $r_A\cdot{}2^{i-1}$.
By summing up his OT output, Bob gets $r_A\cdot{}r_B$.
To restrict Bob to only get a share of $r_A\cdot{}r_B$, Alice offsets each input to the OT by a random $\rho_i$.
Alice's share $s_A$ of $r_A\cdot{}r_B$ is then the sum of the $\rho_i$, and Bob's share $s_B$ is the sum of the OT output.

\paragraph{Details} More formally, to generate an OLE tuple $(r_A,r_B,s_A,s_B)$, Alice and Bob run the following protocol.

\begin{enumerate}
\item Alice randomly chooses $r_A$ from $\mathbb{F}_Q$ and $\ell$ random values $\rho_i$ from $\mathbb{F}_Q$. Bob randomly chooses $r_B$ from $\mathbb{F}^*_Q$.
\item Alice and Bob run $\ell$ instances of 1-out-of-2 OT. In round
  $i$, sender Alice inputs $-\rho_i$ and
  $r_A\cdot{}2^{i-1}-\rho_i$. Receiver Bob inputs choice bit $r_B[i]$ and
  receives output $o_i$.
\item Alice sets $s_A=\sum^\ell_{i=1}{\rho_i}$, and Bob sets
$s_B=\sum^\ell_{i=1}o_i$.\label{rand:choose}
\end{enumerate}

The correctness and security of this protocol has been shown by \citeauthor{gilboa}~\cite[Section 4]{gilboa}.

Recall that the optimized communication during the online phase
(Section~\ref{sec:commopti}) requires only one $s_{A,i}$ for $\beta$-many
$(s_{B,i,j},r_{A,i,j},r_{B,i,j})$. That is, we need to compute the tuple
$(s_{A,i},\{r_{A,i,j},r_{B,i,j},s_{B,i,j}\}_{j=1\ldots\beta})$ such
that $s_{A,i}+s_{B,i,j}=r_{A,i,j}r_{B,i,j}$ for $j\leq\beta$. To
enable OT-based offline preparation of communication optimized OLE
tuples, we consequently must institute one additional change. In the above protocol,
instead of choosing $\ell$ random $\rho$ values and setting
$s_{A,i}=\sum_{\iota=1}^\ell{\rho_\iota}$ in Step~\ref{rand:choose},
Alice randomly chooses one $s_{A,i}$ and $\beta\ell$-many random
$\rho_{j,\iota}$ such that
$s_{A,i}=\sum_{\iota=1}^{\ell}{\rho_{j,\iota}}$ for each $j\leq\beta$. While computing the
$j^{\text{th}}$ product sharing, during round $\iota$, Alice uses $-\rho_{j,\iota}$ and
 $r_{A,i,j}\cdot{}2^{\iota}-\rho_{j,\iota}$ as her input to the OT.
 
We omit proofs of correctness and security, since they are simple corollaries of the theorems by \citet{gilboa} and to abide by page length restrictions.

\section{Evaluation}
\label{sec:evaluation}


\begin{table*}[!tb]\centering
\caption{\label{tab:para}Parameter analysis for our system with statistical security parameter $\lambda = 40$. $n$: number of elements, ${\log{Q}}:$ number of bits to compare, $s:$ size of stash, $\frac{\text{Bit}}{\text{element}}:$ online communication cost per element (in Bit, see text)}
\label{tbl:hashpar}
\begin{tabular}{c|cccc|ccc|ccc}
&\multicolumn{4}{c|}{\multirow{2}{*}{$k=2$, $\alpha=2.4n$}}  &\multicolumn{3}{c|}{$k=3$, $\alpha=1.27n$ }&\multicolumn{3}{c}{$k=4$, $\alpha=1.09n$ } \\
&&&&&\multicolumn{3}{c|}{(no stash $s=0$)}&\multicolumn{3}{c}{(no stash $s=0$)}
  \\$n$&$\beta$ & ${\log{Q}}$ &$s$&$\frac{\text{Bit}}{\text{element}}$&$\beta$ & ${\log{Q}}$&$\frac{\text{Bit}}{\text{element}}$&$\beta$ & ${\log{Q}}$&$\frac{\text{Bit}}{\text{element}}$
     \\\hline
    $2^{20}$ &19&13&3&663& 28&14&\bf{516}&33&15&556
\\    $2^{22}$&20&11&3&588& 28&12&\bf{442}&34&13&496
    \\    $2^{24}$&20&9&2&472 &29&10&\bf{381}&35&11&432
    \\    $2^{26}$ &21&7&2&384&29&8&\bf{305}&35&9&353

\end{tabular}
\end{table*}







\begin{table*}[!tb]\setlength{\tabcolsep}{3pt}
\centering  
\caption{\label{tbl:bitperelement} Theoretical communication cost in Bit per element of our scheme vs. related work, $\lambda=40,\kappa=128$. GPRTY21~\cite{crypto21} and CGS22~\cite{pets22} specify only asymptotic communication complexity, so we extrapolate from their benchmarks.}
  \begin{tabular}{c|ccccc|cccccc|cc}
    \multirow{2}{*}{$n$}& \multicolumn{5}{c|}{Ours ($k=3, \alpha=1.27n$)}&{PSTY19} & {KKRT16} & {PRTY19}&{CM20}&\multicolumn{2}{c|}{RS21~\cite{vole}}& {GPRTY21}&{CGS22 ($\mathsf{PSM}_2$)}
    \\&\bf{online}& total SGX& total TTP& total LBE&total OT&\cite{opprf}&\cite{kkrt}&\cite{spot}&\cite{cm}&\bf{online}&total&\cite{crypto21}&\cite{pets22}
    \\\hline
    $2^{20}$ &516&516&1014&2922&78179&20079 &972&{514}&688&\bf{394}&\bf{419}&(774)&(8856)
\\    $2^{22}$&442&442&869&2956&65303&20079 &980&516&691&\bf{393}&\bf{400}&(943)&(8870)
\\    $2^{24}$&\bf{381}&\bf{381}&749&2893&54889&20079 &988&518&694&396&{398}&(1622)&(8926)
\\    $2^{26}$&\bf{305}&\bf{305}&600&2928&42733&20079 &997&520&697&400&{400}&(4338)&(9150)
  \end{tabular}
\end{table*}

\begin{table*}[!tb]\centering
  \caption{\label{tbl:offline}Offline benchmarks. CPU time: computation time only, Total Offline Time: end-to-end total time (including communication time, waiting for other party etc.) until parties have all OLE tuples. DNF: did not finish in 15min.}
  \begin{tabular}{c|c|S[table-auto-round,table-format=3.1]|S[table-auto-round,table-format=3.1]S[table-auto-round,table-format=3.1]S[table-auto-round,table-format=3.1]S[table-auto-round,table-format=3.1]S[table-auto-round,table-format=3.1]S[table-auto-round,table-format=3.1]|c}
    \multirow{2}{*}{$n$}&&{CPU Time}&\multicolumn{6}{c|}{Total Offline Time (s)} & Communication
    \\&&{(s)}&{5 GBit/s}&{1 GBit/s}&{100 MBit/s}&{10 MBit/s}&{Intra-Cont.}&{Inter-Cont.}&{(MB)}
    \\\hline{}\multirow{4}{*}{$2^{20}$}&SGX& 0.218 & 0.258& 0.258& 0.258&\fpeval{0.258}&\fpeval{1.58*0.258}&\fpeval{1.58*0.258}& $1.5\cdot{}10^{-5}$                                         
    \\&TTP&0.152798&0.253071&0.649439&5.23828&52.150 &\fpeval{1.58*2.01}&\fpeval{1.58*2.281}&   62.23
    \\&LBE&10.687&14.7091&18.0819&55.5732&430.263&\fpeval{1.58*29.404}&\fpeval{1.58*30.3487}&482
\\&OT&13.68&44.71&205.67&{DNF}& {DNF}&{DNF}&{DNF}&23896
                                               
    \\\hline{}\multirow{4}{*}{$2^{22}$}&SGX & 1.113  & 1.513  & 1.513  & 1.513&1.513&\fpeval{1.58*1.513}&\fpeval{1.58*1.513}&$1.5\cdot{}10^{-5}$                                         
    \\&TTP&0.631&0.984&2.377&18.759&182.432&\fpeval{1.58*7.253}&\fpeval{1.58*7.669}& 213
    \\&LBE&42.4072&58.5636&71.5634&221.550&{DNF}&\fpeval{1.58*117.038}&\fpeval{1.58*121.101}&1915
\\&OT&50.244&150.380&703.034&{DNF}&{DNF}&{DNF}&{DNF}&81930

    \\\hline{}\multirow{4}{*}{$2^{24}$}&SGX &7.819&7.859&7.859&7.859&7.859&\fpeval{1.58*7.859}&\fpeval{1.58*7.859}&$1.5\cdot{}10^{-5}$                                         
    \\&TTP&2.522&3.757&8.834&65.855&636.396&\fpeval{1.58*25.6853}&\fpeval{1.58*27.226}&   737.6
    \\&LBE&175.571&240.322&289.764&{DNF}&{DNF}&\fpeval{1.58*481.368}&\fpeval{1.58*498.341}&7951
\\&OT&178.862&519.882&{DNF}&{DNF}&{DNF}&{DNF}&{DNF}&  282855

    \\\hline{}\multirow{4}{*}{$2^{26}$}&SGX &39.080&39.120&39.120&39.120&39.120&\fpeval{1.58*39.120}&\fpeval{1.58*39.120}&$1.5\cdot{}10^{-5}$                                         
    \\&TTP&11.536&15.5051&31.4193&214.675&{DNF}&\fpeval{1.58*86.087}&\fpeval{1.58*91.085}&     2357
    \\&LBE&{DNF}&{DNF}&{DNF}&{DNF}&{DNF}&{DNF}&{DNF}&{DNF}
\\&OT&576.836 &{DNF}&{DNF}&{DNF}&{DNF}&{DNF}&{DNF}&905135

  \end{tabular}
\end{table*}

\begin{table*}[!tb]\centering\setlength{\tabcolsep}{2pt}
  \caption{\label{tbl:online} Online benchmarks. Time in seconds, communication in MByte. DNF: did not finish in 15~min or crashed. ``---'': source code not available. CPU: time for computation only, Online Total: end-to-end total time (including communication time, waiting for other party etc.) until intersection has been computed.  Values in ``$()$'' are taken from original papers due to lack of source code (\cite{vole,crypto21}). 
  GPRTY21~\cite{crypto21} benchmark with $4.6$~GBit/s, $260$~Mbit/s, $33$~MBit/s bandwidth.}
  \resizebox{\textwidth}{!}{\begin{tabular}{@{}cc|S[table-auto-round,table-format=3.1]S[table-auto-round,table-format=4.1]|S[table-auto-round,table-format=3.1]c|cc|S[table-auto-round,table-format=3.1]c|cccc|cS[table-auto-round,table-format=3.1,detect-weight,mode=text]|c|S[table-auto-round,table-format=3.1,detect-weight,mode=text]S[table-auto-round,table-format=3.1,detect-weight,mode=text]S[table-auto-round,table-format=3.1,detect-weight,mode=text]S[table-auto-round,table-format=3.1,detect-weight,mode=text]|c@{}}
    \multicolumn{10}{c|}{}&\multicolumn{4}{c|}{RS21~\cite{vole}}&\multicolumn{3}{c|}{}&\multicolumn{5}{c}{{{Ours Online+Offline}}}
    \\& &\multicolumn{2}{c|}{KKRT16~\cite{kkrt}} & \multicolumn{2}{c|}{PRTY19~\cite{spot}} &\multicolumn{2}{c|}{GPRTY21~\cite{crypto21}} &\multicolumn{2}{c|}{CM20~\cite{cm}} & \multicolumn{2}{c}{Online}& \multicolumn{2}{c|}{Online+Offline}&\multicolumn{3}{c|}{}&\multicolumn{4}{c|}{Time}&\multirow{3}{*}{{Comm}}
     \\ \multicolumn{2}{c|}{Sec.~Assumption} &\multicolumn{2}{c|}{LBE} & \multicolumn{2}{c|}{LBE} & \multicolumn{2}{c|}{LBE} & \multicolumn{2}{c|}{LBE} & \multicolumn{4}{c|}{LBE} &\multicolumn{3}{c|}{\bf{Ours Online}}&\multicolumn{4}{c|}{Security~Assumption}&
   \\ {$n$}&{Bandwidth}&{Time} & {Comm}&{Time} & {Comm} &{Time}& {Comm}&{Time} & {Comm} &{Time} & {Comm}&{Time} & {Comm}&{CPU} &{\bf{Time}} & {\bf{Comm}}&{SGX}&{TTP}&{{LBE}}&{OT}&
                                                                                                                                                                                                                                     
    \\\hline{}\multirow{6}{*}{$2^{20}$}
    & 5~GBit/s&1.3&{\multirow{6}{*}{114}}&16.838&{\multirow{6}{*}{76}}&(5.8)&{\multirow{6}{*}{(97)}}&5.056&{\multirow{6}{*}{89}}&(4.4)&{\multirow{6}{*}{\ubold (53)}}&(5.4)&{\multirow{6}{*}{(54)}}&{\multirow{4}{*}{0.3}}&\ubold 0.421&{\multirow{6}{*}{ 64}}&0.7&0.7&15.1&45.1&\multirow{6}{*}{\begin{tabular}{l}SGX: 64\\TTP: \fpeval{64+62}\\LBE: \fpeval{64+482}\\OT: \fpeval{64+23896}\end{tabular}}
    \\& 1~GBit/s&1.389&&17.614&&---&&5.051&& {---}&& {---}&&&\ubold 0.830&&1.1&1.5&18.9&206.5&
    \\ & 100~MBit/s&10.264&&17.670&&(10.6)&&8.073&&(8.5)&&(9.9)&&&\ubold 5.872&&6.1&11.1&64.4&{DNF}&
    \\ & 10~MBit/s&98.976&&66.712&&(38.3)&&75.858&&(48.7)&&(54.4)&&&\ubold 55.859&&56.1&108.0&486.1&{DNF}&
                                                                                                                                                                                                  
    \\\cline{15-16}\cline{18-21}&Intra-Cont.&4.129&&26.012&&{---}&&8.838&&{---}&&---&&{\multirow{2}{*}{0.5}}&\ubold 2.747&&\fpeval{2.747+0.4}&\fpeval{2.747+3.2}&\fpeval{2.747+46.5}&{DNF}
    \\&Inter-Cont.&5.780&&26.187&&{---}&&9.243&&---&&---&&&\ubold 3.371&&\fpeval{3.371+0.4}&\fpeval{3.371+3.6}&\fpeval{3.371+48.0}&{DNF}
    
    \\\hline \multirow{6}{*}{$2^{22}$}
                          & 5~GBit/s&5.145&{\multirow{6}{*}{471}}&66.074&{\multirow{6}{*}{314}}&\multicolumn{2}{c|}{\multirow{6}{*}{---}}&25.222&{\multirow{6}{*}{358}}&(23.9)&{\multirow{6}{*}{\ubold (209)}}&(25.6)&\multirow{6}{*}{(210)}&{\multirow{4}{*}{1.1}}&\ubold 1.537&{\multirow{6}{*}{221}}&3.1&2.5&60.1&151.9&{\multirow{6}{*}{\begin{tabular}{l}SGX: 221\\TTP: \fpeval{221+213}\\LBE: \fpeval{221+1915}\\OT: \fpeval{221+81930}\end{tabular}}}
                                                                                                                                                                                                                                                    
    \\& 1~GBit/s&5.686&&66.364&&&&25.587&&{---}&&{---}&&&\ubold 2.996&&4.5&5.4&74.6&706.0&
    \\ & 100~MBit/s&42.199&&73.266&&&&31.671&&(40.7)&&(43.0)&&&\ubold 20.156&&21.7&38.9&241.7&{DNF}
    \\ & 10~MBit/s&408.275&&275.163&&&&309.429&&(199.0)&&(204.7)&&&\ubold 191.789&&193.3&374.2&{DNF}&{DNF}
                                                                                                      
    \\\cline{15-16}\cline{18-21}&Intra-Cont&14.196&&103.218&&&&41.892&&{---}&&{---}&&{\multirow{2}{*}{2.1}}&\ubold 7.589&&\fpeval{7.589+2.4}&\fpeval{7.589+11.5}&\fpeval{7.589+184.9}&{DNF}
    \\&Inter-Cont.&19.849&&102.963&&&&{DNF}&&---&&{---}&&&\ubold 9.558&&\fpeval{9.558+2.4}&\fpeval{9.558+12.1}&\fpeval{9.558+191.3}&{DNF}

    \\\hline \multirow{6}{*}{$2^{24}$}
    & 5~GBit/s&21.402&{\multirow{6}{*}{1894}}&293.590&{\multirow{6}{*}{1271}}&\multicolumn{2}{c|}{\multirow{6}{*}{---}}&106.255&{\multirow{6}{*}{1442}}&(90.74)&{\multirow{6}{*}{(850)}}&(92.8)&\multirow{6}{*}{(851)}&{\multirow{4}{*}{5.0}}&\ubold 6.181&{\multirow{6}{*}{\ubold 762}}&14.0&9.9&246.5&526.1&{\multirow{6}{*}{\begin{tabular}{l}SGX: 762\\TTP: \fpeval{1500}\\LBE: \fpeval{762+7951}\\OT: \fpeval{762+282855}\end{tabular}}}
    \\& 1~GBit/s&22.562&&294.133&&&&108.210&&{---}&&&&&\ubold 11.438&&19.3&20.3&301.2&{DNF}&
    \\ & 100~MBit/s&169.483&&321.756&&&&128.034&&(156.4)&&&&&\ubold 70.783&&78.6&136.6&{DNF}&{DNF}&
    \\ & 10~MBit/s&{DNF}&&{DNF}&&&&{DNF}&&(814.2)&&&&&\ubold  662.669&&670.5&{DNF}&{DNF}&{DNF}&
                                                                                                
\\\cline{15-16}\cline{18-21}&Intra-Cont.&53.112&&460.501&&&&169.823&&{---}&&---&&{\multirow{2}{*}{9.6}}&\ubold  27.088&&\fpeval{27.088+12.4}&\fpeval{27.088+40.6}&\fpeval{27.088+760.6}&{DNF}
    \\&Inter-Cont.&126.734&&459.157&&&&{DNF}&&---&&---&&&\ubold 31.275&&\fpeval{31.275+12.4}&\fpeval{31.275+43.0}&\fpeval{31.275+787.4}&{DNF}
                                                                 
    \\\hline \multirow{6}{*}{$2^{26}$}
    & 5~GBit/s&86.292&{\multirow{6}{*}{7800}}&\multicolumn{2}{c|}{\multirow{6}{*}{DNF}}&\multicolumn{2}{c|}{\multirow{6}{*}{---}}&542.987&{\multirow{6}{*}{5769}}&\multicolumn{4}{c|}{\multirow{6}{*}{---}}&{\multirow{4}{*}{15.8}}&\ubold  19.247&{\multirow{6}{*}{\ubold  2438}}&58.4&34.8&{DNF}&{DNF}&{\multirow{6}{*}{\begin{tabular}{l}SGX: 2438\\TTP: \fpeval{2438+2357}\\LBE: {DNF}\\OT: \fpeval{2438+905135}\end{tabular}}}
    \\& 1~GBit/s&97.069&&&&&&549.920 &&&&&&&\ubold  35.766&&74.9&67.2&{DNF}&{DNF}&
    \\ & 100~MBit/s&697.316&&&&&&670.346 &&&&&&&\ubold  225.811&&264.9&440.5&{DNF}&{DNF}&
    \\ & 10~MBit/s&{DNF}&&&&&&{DNF} &&&&&&&{DNF}&&\multicolumn{4}{c|}{{DNF}}
                                                   
\\\cline{15-16}\cline{18-21}&Intra-Cont.&229.702&&&&&&{DNF}&&&&&&{\multirow{2}{*}{24.9}}&\ubold 96.194&&\fpeval{96.194+61.8}&\fpeval{96.194+136.0}&{DNF}&{DNF}
     \\&Inter-Cont.&553.786&&&&&&{DNF}&&&&&&&\ubold 99.448&&\fpeval{99.448+61.8}&\fpeval{99.448+143.9}&{DNF}&{DNF}
                                                
  \end{tabular}
}
  \end{table*}

We have implemented our protocol's online and offline phases in C++. The source code is available for download at~\hyperref[https://github.com/BlazingFastPSI/NDSS23]{\url{https://github.com/BlazingFastPSI/NDSS23}}.
Using this implementation, we have evaluated our protocol's communication cost (at the application layer in the network stack) and its running time.
We compare our implementation to the recent works of \citet{cm}, \citet{kkrt}, and \citet{spot} (``SpOT-low'') using their publicly available implementations.
We also compare to the recent VOLE-PSI protocol by \citet{vole} and OKVS by \citet{crypto21} using their measurements in their paper, since no implementation is publicly available. 
These two protocols are especially interesting, as they have low communication requirements (in bits per element).
Our protocol is a set intersection protocol, revealing the set intersection, but we also compare to state-of-the-art circuit PSI (or PSI analytics) protocols~\cite{opprf,pets22} to demonstrate the difference. 
VOLE-PSI requires an offline phase such as our protocol, but we concentrate on online timings and communication in this section. 
Note that VOLE-PSI by \citet{vole} can also be extended to PSI circuit PSI. 

We stress that there exists a large body of related work on practical
PSI, e.g., OT-based PSI~\cite{psiot,phasing}, but our experiments
only focus on the very recent works mentioned above as they typically outperform
other approaches.

All of the related work with which we directly compare uses purely cryptographic assumptions.
If a party is not willing to accept any additional security assumption during the offline phase only, the relevant comparison is with our offline phase implemented based on lattice-based encryption.
However, since we focus on the online time of our protocol which makes no additional security assumption, we also present the times for offline phases using additional security assumptions, such as TEEs or a trusted third party.

In all of our experiments, we use an element bit length of $\sigma = 32$.
This bit length is used in the evaluation of many related works~\cite{opprf,phasing,psiot,dualcuckoo,hashotex}. It is sufficient for several applications, including matching on phone numbers (``contact discovery'') or IPv4 addresses, without using hashing and hence without collisions.
Often, longer domains can be mapped to 32 bits without collisions using a public dictionary, since the largest data set sizes we consider are only $2^{26}$ (which already exceeds most previous work).

We consider the four implementations for our offline phase as described in Section~\ref{sec:precomputation}:
\begin{itemize}

\item We implement the TTP (Section~\ref{sec:thirdparty}), where the TTP derives all
  $(r_{A,i,j},r_{B,i,j},s_{A,i},s_{B,i,j})$, sends $R_A$ and all
  $r_{A_j}$ to Alice, and sends $R_B$ to Bob. Alice then computes the
  $(s_{A,i})$ and Bob the $(r_{B,i,j},s_{B,i,j})$.

\item We use Intel's SGX (Version 1) to implement the offline phase in a trusted execution environment (Section~\ref{sec:teeoffline}).

 \item  We use the Fan–Vercauteren cryptosystem~\cite{fv} for the offline phase using LBE (Section~\ref{sec:somewhat}). Our implementations employs Microsoft's SEAL library~\cite{sealcrypto}. There, we set the polynomial modulus degree and coefficient modulus bit length to be 8192 and 218, respectively. We also perform modulus switching to a 40-bit coefficient modulus. We do not implement the noise flooding technique for our experiments since the SEAL library does not support the necessary operations at the moment. However, based on the analysis of Castro et al.~\cite{fast-ole}, which is summarized in Lemma 4.2 of their paper, our parameter set provides circuit privacy for (v)OLE with more than 40 bits of statistical indistinguishability. The overhead of adding noise flooding to the protocol is negligible compared to the rest of the protocol (at most the cost of one homomorphic addition).

 \item We implement the OT-based offline phase (Section~\ref{sec:otoffline}), building on IKNP OT extensions~\cite{otextension} from EMP toolkit~\cite{emp-toolkit}.
   
\end{itemize}
\subsection{Theoretical communication cost}
\label{sec:communication}
We differentiate between online and offline communication cost.
First, we theoretically analyze the online communication cost and estimate the optimal parameters for the number of hash functions, hash table length, and the stash size in our scheme.
We set the statistical security parameter $\lambda = 40$ to minimize the probability of failure that cuckoo hashing exceeds  stash size $s$ or regular hashing exceeds the expected maximum number $\beta$ of  elements per bin.
Equation~(3) in Section 7.1 of \citet{phasing} provides an upper bound for $\beta$, i.e., the expected maximum number of elements per bin, given $n$ and $\alpha$.
Similarly, Section 3.2.2 in~\citet{psiot} provides estimations for the stash size based on extensive experimentation and extrapolation where necessary.
We compute the communication cost per element, i.e., the number of Bit per element, as 
\begin{equation}
\label{eq:onlinecomm}
\frac{\mathrm{Bit}}{\mathrm{element}} = \frac{\alpha\cdot(\beta + 1) + s \cdot (n+ 1)}{n}{\log{Q}}.
\end{equation}

Table~\ref{tbl:hashpar} summarizes our results.
We conclude that for set sizes between $2^{20}$ and $2^{26}$ using $k = 3$ has the lowest online communication cost and also does not require a stash.

Next, we theoretically analyze the offline and the total communication cost in bits per element and compare it to related work.
We consider the four possible implementations of the offline phase mentioned before: trusted execution environments, a trusted third party, lattice-based homomorphic encryption, and OT.
\begin{itemize}

\item The communication cost of using an SGX offline phase is $256$ bits to jointly choose a common seed. We did not include the time and communication for attestation, since it can be amortized over multiple runs.

\item The communication cost of the offline phase using a trusted
  third party is $128$ bits to Bob ($R_B$) and $128 + \alpha \beta \log{Q}$ bits
  to Alice ($R_A$ and $r_{A,i}$s).
  
\item Let $N$ and $q$ be the polynomial modulus degree and coefficient modulus in SEAL. Also, denote the coefficient modulus after modulus switching by $q_0$.
Each ciphertext sent from Alice to Bob has a size of $N\log_2 q$ bits and Alice can pack $N$ different $s_A$ values into one ciphertext.
In sum, Alice sends $N\log_2 q\cdot\lceil\frac{\alpha}{N}\rceil$ bits of data to Bob.
For each ciphertext received, Bob sends $\beta$ ciphertexts back to Alice.
Ciphertexts that are sent from Bob to Alice have a size of $2N\log_2 q_0$.
The total communication cost using lattice-based homomorphic encryption is $N \cdot (2\beta\log_2 q_0 + \log_2 q)\lceil\frac{\alpha}{N}\rceil$ bits.

\item For the OT-based offline phase, we need $\alpha\cdot\beta$
  product sharings. Each product sharing is implemented by $\log{Q}$
  instances of 1-out-of-2 OT extensions of length $\log{Q}$ bit
  strings. This results in a total of $m=\alpha\beta\log{Q}$ OT
  extensions with a total of $2m\log{}Q=\alpha\beta\cdot{}2\log^2{Q}$ Bit of
  communication.

  To bootstrap $m$ instances of IKNP OT extensions~\cite{otextension}, we
  need, first, to perform $\kappa$ (security parameter) base OTs of
  length $\kappa$~Bit, e.g., using the base OTs by \citeauthor{baseot}
  \cite[Figure~1]{baseot}.  When implemented over an elliptic
  curve, these base OTs require a total of
  $\kappa\cdot(2|P|+2\kappa)$~Bit of communication, where $|P|$
  denotes the bit length of a curve point. Then, we apply a
  transformation scheme, e.g., the one by \citet{alsz}, where the OT
  receiver sends $\kappa$ strings of length $m$~Bit to the
  receiver. So, this preparation phase requires in total
  $\kappa(2|P|+2\kappa)+\kappa{}m$~Bit of communication.
  
  In conclusion, the total communication cost for the OT offline phase
  (preparation plus extensions) computes to
  $2m\log{}Q+\kappa(2|P|+2\kappa)+\kappa{}m=\alpha\beta\log{Q}(2\log{Q}+\kappa)+2(\kappa^2+\kappa|P|)$~Bit.
  Note that a curve with $G$ points offers
  $\kappa\approx\log{\sqrt{\frac{G\cdot\pi}{4}}}$ Bit
  security~\cite{curves}, so for our security parameter $\kappa=128$
  in Table~\ref{tbl:bitperelement} we use a $256$~Bit curve and have
  $|P|=257$~Bit (using point compression).

\end{itemize}
Note that an alternative to standard IKNP OT extensions could be
recent Silent OT~\cite{silentot,silentottwo}.  Silent OT could reduce
communication costs at the expense of higher computational costs which
might be interesting for low bandwidth networks such as the 10~MBit/s
configuration we consider in Section~\ref{sec:runningtime}. We leave a
further analysis to determine the best OT extensions for different
network configurations to future work.

Using the above for the offline phases and
Equation~\ref{eq:onlinecomm} for the online phase, we can compute the
total communication cost in Bit per element of our PSI
protocol. Table~\ref{tbl:bitperelement} summarizes our results.  For
related work~\cite{kkrt,cm,spot,vole}, we use the equations from Table~1 (column 3) of
\citet{vole}.  As \citet{crypto21} and \citet{pets22} specify only
asymptotic communication complexity and also do not provide source code, we extrapolate their communication
costs from their benchmarks.


We conclude that our online communication cost and our total communication cost using Intel SGX is the lowest among related work for larger sets starting from $n=2^{24}$.

\subsection{Benchmarks}
\label{sec:runningtime}

\begin{table*}[tb]\caption{Online end-to-end time compared to circuit-PSI schemes. DNF: did not finish in 15~min or crashed. Value in ``()'' taken from original paper~\cite{opprf}}\label{tbl:circuitpsi}\centering
      \begin{tabular}{cc|S[table-auto-round,table-format=3.1,detect-weight,mode=text]c|S[table-auto-round,table-format=3.1,detect-weight,mode=text]c|S[table-auto-round,table-format=3.1,detect-weight,mode=text]c}
        &&\multicolumn{2}{c|}{PSTY19~\cite{opprf}}&\multicolumn{2}{c|}{CGS22 ($\mathsf{PSM}_2$)~\cite{pets22}}&\multicolumn{2}{c}{{\bf Ours}}
        \\$n$&Bandwidth&{{{Time}}}&{{Comm}}&{{{Time}}}&{{Comm}}&{\bf Time}&{\bf Comm}
        \\\hline\multirow{4}{*}{$2^{20}$}&5~GBit/s&37.533&\multirow{4}{*}{(2540)}&16.989&\multirow{4}{*}{1127}&\ubold 0.421&\multirow{4}{*}{\ubold 64}
        \\&1~GBit/s&52.524&&23.129&&\ubold 0.830
        \\&100~MBit/s&255.634&&107.175&&\ubold 5.872
        \\&10~MBit/s&{DNF}&&{DNF}&&\ubold 55.859

        \\\hline
        {$2^{22}$ -- $2^{26}$}&&\multicolumn{2}{c|}{{DNF}}&\multicolumn{2}{c|}{{DNF}}& {see Table~\ref{tbl:online}}
                                    
                                    \ignore{
        \\\hline\multirow{4}{*}{$2^{22}$}&5~GBit/s&\multicolumn{2}{c|}{\multirow{4}{*}{{DNF}}}&\multicolumn{2}{c|}{\multirow{4}{*}{{DNF}}}&\ubold 1.537&{\multirow{4}{*}{\ubold 221}}
        \\&1~GBit/s&&&&&\ubold 2.996
        \\&100~MBit/s&&&&&\ubold 20.156
        \\&10~MBit/s&&&&&\ubold 191.789
        
        \\\hline\multirow{4}{*}{$2^{24}$}&5~GBit/s&\multicolumn{2}{c|}{\multirow{4}{*}{{DNF}}}&\multicolumn{2}{c|}{\multirow{4}{*}{{DNF}}}&\ubold 6.181&{\multirow{4}{*}{\ubold 762}}
        \\&1~GBit/s&&&&&\ubold 11.438&
        \\&100~MBit/s&&&&&\ubold 70.783&
        \\&10~MBit/s&&&&&\ubold  662.669&
        
        \\\hline\multirow{4}{*}{$2^{26}$}&5~GBit/s&\multicolumn{2}{c|}{\multirow{4}{*}{{DNF}}}&\multicolumn{2}{c|}{\multirow{4}{*}{{DNF}}}&\ubold  19.247&{\multirow{4}{*}{\ubold  2438}}
        \\&1~GBit/s&&&&&\ubold  35.766&
        \\&100~MBit/s&&&&&\ubold  225.811&
        \\&10~MBit/s&&&&&{DNF}&
}
    \end{tabular}
  \end{table*}

  \ignore{
\begin{table*}[!tb]\centering\setlength{\tabcolsep}{3pt}
  \caption{\label{tbl:totaltime} Total runtime including offline phase in seconds, total communication (online+offline) in MByte. DNF: did not finish in 15~min or crashed. ``---'': source code not available. Values in ``$()$'' are taken from original papers (\cite{vole,opprf,crypto21}).  The PSTY19~\cite{opprf} implementation does not report communication. GPRTY21~\cite{crypto21} benchmark using $4.6$~GBit/s, $260$~Mbit/s, $33$~MBit/s bandwidth.}\centering
  \resizebox{\textwidth}{!}{\begin{tabular}{@{}cc|S[table-auto-round,table-format=3.1]S[table-auto-round,table-format=4.1]|S[table-auto-round,table-format=3.1]c|cc|S[table-auto-round,table-format=3.1]c|S[table-auto-round,table-format=3.1]c|S[table-auto-round,table-format=3.1]c|cc|S[table-auto-round,table-format=3.1,detect-weight,mode=text]S[table-auto-round,table-format=3.1,detect-weight,mode=text]S[table-auto-round,table-format=3.1,detect-weight,mode=text]S[table-auto-round,table-format=3.1,detect-weight,mode=text]|c@{}}

    \\\multicolumn{16}{c|}{}&\multicolumn{5}{c}{{{\bf Ours}}}  
\\&&\multicolumn{2}{c|}{KKRT16~\cite{kkrt}}&\multicolumn{2}{c|}{PRTY19~\cite{spot}}&\multicolumn{2}{c|}{GPRTY21~\cite{crypto21}}&\multicolumn{2}{c|}{PSTY19~\cite{opprf}}&\multicolumn{2}{c|}{CGS22~\cite{pets22}}&\multicolumn{2}{c|}{CM20~\cite{cm}}&\multicolumn{2}{c|}{RS21~\cite{vole}}&\multicolumn{4}{c|}{{\bf Time} with offline phase}&{{\bf Comm} with}
    \\$n$&{Bandwidth}&{{{Time}}}&{{Comm}}&{{{Time}}}&{{Comm}}&{Time}&{{Comm}}&{Time}&{Comm}&{Time}&{Comm}&{Time}&{Comm}&{Time}&{Comm}&{SGX}&{TTP}&{LBE}&{OT}&{offline phase}
                                                                                                                          
    \\\hline{}\multirow{4}{*}{$2^{20}$}
    & 5~GBit/s&1.3&{\multirow{4}{*}{114}}&16.838&{\multirow{4}{*}{76}}&(5.8)&{\multirow{4}{*}{(97)}}&37.533&{\multirow{4}{*}{(2540)}}&16.989&\multirow{4}{*}{1127}&5.056&{\multirow{4}{*}{89}}&(5.4)&{\multirow{4}{*}{ (54)}}&\fpeval{0.421+0.258}&\fpeval{0.421+0.253071}&\fpeval{0.421+14.7091}&\fpeval{0.421+44.71}&\multirow{4}{*}{\begin{tabular}{l}SGX: 64\\TTP: \fpeval{64+62}\\LBE: \fpeval{64+482}\\OT: \fpeval{64+23896}\end{tabular}}
    \\& 1~GBit/s&1.389&&17.614&&{---}&&52.524&&23.129&&5.051&& {---}&& \fpeval{0.830+0.258}&\fpeval{0.830+0.649439}&\fpeval{0.830+18.0819}&\fpeval{0.830+205.67}&
    \\ & 100~MBit/s&10.264&&17.670&&(10.6)&&255.634&&107.175&&8.073&&(9.9)&& \fpeval{5.872+0.258}&\fpeval{5.872+5.23828}&\fpeval{5.872+55.5732}&{DNF}& 
    \\ & 10~MBit/s&98.976&&66.712&&(38.3)&&{DNF}&&{DNF}&&75.858&&(54.4)&& \fpeval{55.859+0.258}&\fpeval{55.859+52.150}&\fpeval{55.859+430.263}&{DNF}&
    
    \\\hline \multirow{4}{*}{$2^{22}$}& 5~GBit/s&5.145&{\multirow{4}{*}{471}}&66.074&{\multirow{4}{*}{314}}&\multicolumn{2}{c|}{\multirow{4}{*}{---}}&\multicolumn{2}{c|}{\multirow{4}{*}{DNF}}&\multicolumn{2}{c|}{\multirow{4}{*}{{DNF}}}&25.222&{\multirow{4}{*}{358}}&(25.6)&{\multirow{4}{*}{ (210)}}& \fpeval{1.537+1.513}&\fpeval{1.537+0.984}&\fpeval{1.537+58.5636}&\fpeval{1.537+150.380}&{\multirow{4}{*}{\begin{tabular}{l}SGX: 221\\TTP: \fpeval{221+213}\\LBE: \fpeval{221+1915}\\OT: \fpeval{221+81930}\end{tabular}}}
    \\& 1~GBit/s&5.686&&66.364&&&&&&&&25.587&&{---}&& \fpeval{2.996+1.513}&\fpeval{2.996+2.377}&\fpeval{2.996+71.5634}&\fpeval{2.996+703.034}&
    \\ & 100~MBit/s&42.199&&73.266&&&&&&&&31.671&&(43.0)&& \fpeval{20.156+1.513}& \fpeval{20.156+18.759}& \fpeval{20.156+221.550}& {DNF}&
    \\ & 10~MBit/s&408.275&&275.163&&&&&&&&309.429&&(204.7)&& \fpeval{191.789+1.513}&\fpeval{191.789+182.432}&{DNF}&{DNF}&
                                                                 
    \\\hline \multirow{4}{*}{$2^{24}$}
    & 5~GBit/s&21.402&{\multirow{4}{*}{1894}}&293.590&{\multirow{4}{*}{1271}}&\multicolumn{2}{c|}{\multirow{4}{*}{---}}&\multicolumn{2}{c|}{\multirow{4}{*}{DNF}}&\multicolumn{2}{c|}{\multirow{4}{*}{{DNF}}}&106.255&{\multirow{4}{*}{1442}}&(92.8)&{\multirow{4}{*}{(851)}}& \fpeval{6.181+7.859}&\fpeval{6.181+3.757}&\fpeval{6.181+240.322}&\fpeval{6.181+519.882}&{\multirow{4}{*}{\begin{tabular}{l}SGX: 762\\TTP: \fpeval{1500}\\LBE: \fpeval{762+7951}\\OT: \fpeval{762+282855}\end{tabular}}}
    \\& 1~GBit/s&22.562&&294.133&&&&&&&&108.210&&{---}&& \fpeval{11.438+7.859}&\fpeval{11.438+8.834}&\fpeval{11.438+289.764}&{DNF}&
    \\ & 100~MBit/s&169.483&&321.756&&&&&&&&128.034&&(158.7)&& \fpeval{70.783+7.859}&\fpeval{70.783+65.855}&{DNF}&{DNF}&
    \\ & 10~MBit/s&{DNF}&&{DNF}&&&&&&&&{DNF}&&(819.9)&&  \fpeval{662.669+7.859}&{DNF}&{DNF}&{DNF}&
                                                                 
    \\\hline \multirow{4}{*}{$2^{26}$}
    & 5~GBit/s&86.292&{\multirow{4}{*}{7800}}&\multicolumn{2}{c|}{\multirow{4}{*}{DNF}}&\multicolumn{2}{c|}{\multirow{4}{*}{---}}&\multicolumn{2}{c|}{\multirow{4}{*}{DNF}}&\multicolumn{2}{c|}{\multirow{4}{*}{{DNF}}}&542.987&{\multirow{4}{*}{5769}}&\multicolumn{2}{c|}{\multirow{4}{*}{---}}&  \fpeval{19.247+39.12}&\fpeval{19.247+15.5051}&{DNF}&{DNF}&{\multirow{4}{*}{\begin{tabular}{l}SGX: 2438\\TTP: \fpeval{2438+2357}\\LBE: {DNF}\\OT: \fpeval{2438+905135}\end{tabular}}}
    \\& 1~GBit/s&97.069&&&&&&&&&&549.920 &&&&  \fpeval{35.766+39.120}&\fpeval{35.766+31.4193}&{DNF}&{DNF}
    \\ & 100~MBit/s&697.316&&&&&&&&&&670.346 &&&&  \fpeval{225.811+39.120}&\fpeval{225.811+214.675}&{DNF}&{DNF}
    \\ & 10~MBit/s&{DNF}&&&&&&&&&&{DNF} &&&&\multicolumn{4}{c|}{{DNF}}
                                                
                            \end{tabular}
                            }
  \end{table*}
}


We evaluate the runtime in two different environments and compare it to recent related work~\cite{kkrt,spot,opprf,cm,vole,pets22,crypto21}.
The two environments we consider are, first, a controlled environment where we can precisely adjust network bandwidth using Wondershaper~\cite{wondershaper}. This controlled environment is a workstation with a 3~GHz Intel Xeon W-1290 CPU and 64~GByte RAM.
The second environment comprises different data centers in the Microsoft Azure cloud.
There, we emulate an intracontinental scenario (``Intra-Cont.'') by benchmarking PSI protocols between two data centers on the US East coast and the US West coast.
We also emulate an intercontinental scenario  (``Inter-Cont.'') by benchmarking protocols between two data centers on the US East coast and Western Europe. Each data center is running our implementation in a standard Azure~H8 instance, i.e., on an Intel 3.2~GHz (3.6 GHz turbo frequency) Xeon E5-2667 CPU with 56~GByte RAM. 

Our implementation and all benchmarks use statistical security
parameter $\lambda=40$ and computational security parameter
$\kappa=128$.
 
\paragraph{Offline}
We first report on our evaluation of the offline phase in the controlled environment.
Note that although our protocol is highly parallelizable, all experiments both in the offline and online phases are run single threaded for fair comparison with related work. Our results are summarized in Table~\ref{tbl:offline}.

As expected, the TTP and SGX offline phases significantly outperform
the OT- and LBE-based ones. For very large values of $n=2^{26}$ or low
bandwidth networks ($10$~MBit/s), they do not complete within our set
time limit of 15~minutes, so we do not report measurement results. For
$n=2^{26}$, we can measure CPU time and communication requirements for
the OT-based offline phase, simply by releasing the network
throttle. This is not possible for the LBE-based offline phase, as
even computation does not complete in 15~min.
As we have theoretically analyzed in Section~\ref{sec:precomputation}, the dual TEE configuration in the SGX offline phase has a small constant communication cost for setting up the seed whereas all our other offline phases have linear communication cost.
Hence, the running time of the SGX offline phases is largely independent of the network environment different from our other offline phases.

We conclude that the offline phase based on SGX is by far the most
efficient in communication cost and total latency.  However, we note that it introduces an addition security assumption compared to the related work with which we directly compare.  Its communication
cost is much smaller than other implementations and the time to
communicate the tuples, even in plain, exceeds the computation cost
in a protected environment such as a trusted enclave.  The offline
phase implemented with a TTP, such as a cloud service
provider, could be scheduled for times of low utilization, thus
reducing its cost.  It turns out that, due to smaller communication
requirements, the offline phase using LBE is more efficient than the
OT-based ones. Moreover, it makes the least trust assumptions of all
offline phases.
Hence, LBE is the method of choice when relying only on cryptographic assumptions as related work \cite{cm,kkrt,spot,vole,crypto21} does.


\paragraph{Online} Next, we report on the running time and exchanged data of our online phase in comparison to related work~\cite{kkrt,spot,crypto21,cm,vole}.
We compare to the publicly available implementations where available and measure running times and the amount of data exchanged within the same environments.
All protocols are executed single-threaded as is common practice for fair comparison.
Only for VOLE-PSI~\cite{vole} and OKVS~\cite{crypto21}, we resort to their published numbers in the most comparable setting, since there is no publicly available implementation.
The faster of the two, VOLE-PSI~\cite{vole} was evaluated on an Amazon EC2 M5.2xlarge VMs, Intel Xeon Platinum 8175M 2.5 GHz (3.5 GHz turbo frequency), 32 GiB of RAM.
We chose Micrsoft Azure, since it provides access to TEEs, and consider our configuration the most similar to the Amazon one among the ones that were available.
For the SpOT-Light PSI protocol by \citet{spot}, we use the computation-optimized version, since the communication-optimized version had very high running times and performed worse in our tested environments.

The benchmarks are summarized in Table~\ref{tbl:online}. We measure
our implementation and implementations of related work and present
total, end-to-end running time and communication exchanged.  For our
scheme, we additionally show CPU-only time (including hashing, all comparisons, and
Alice's computation of the intersection) to allow a better
understanding of our scheme's performance.

On larger data sets, starting at $n=2^{24} = 16$ million elements, our online phase is  $2.4$ to $3.5$ times faster than the protocol by \citet{kkrt} and $1.2$ to $14.6$ times faster than the one by \citet{vole}, the two currently most competitive related works. 
At the same time, our protocol requires less communication than either one of them.
Kolesnikov et al.'s protocol~\cite{kkrt} performs best in high-performance network environments, which may be the future given that CPU performance currently increases slower than network performance, and it is already outperforming VOLE-PSI~\cite{vole} in the same-continent cloud setting (compared to the fastest time measured for VOLE-PSI).
Schoppmann and Rindal's protocol VOLE-PSI~\cite{vole} has the lowest communication cost for smaller sets with $n=2^{20}$ and $n=2^{22}$ and performs best in low-performance network environments.
VOLE-PSI has lower communication complexity ($O(n)$) compared to our protocol ($O(n \log n / \log \log n)$) but higher computation complexity ($O(n) \polylog n$) compared to our protocol's complexity of $O(n \log n / \log \log n)$.
However, in the common network settings as experienced in our real-world cloud experiments, we expect that the advantage in total latency of the online phase of our protocol compared to VOLE-PSI increases as data set sizes continue to grow, since the protocols in these environments are computation-bound and not communication-bound.





Although circuit-PSI provides a richer functionality than our PSI protocol, we compare communication cost and running times with the two most recent, cirucit-PSI protocols~\cite{opprf,pets22} in Table~\ref{tbl:circuitpsi}.
Not surprisingly, our protocol is much faster, and implementations of circuit-PSI protocols do not even scale beyond set sizes of $2^{20}$.

\paragraph{Total time}
When using lattice-based cryptography for our offline phase, Kolesnikov et al.'s protocol~\cite{kkrt} is the fastest in high-performance network environments and VOLE-PSI~\cite{vole} is the fastest in low-performance network environments.
Only when allowing additional security assumptions during the offline, such as TEEs, our protocol becomes the fastest in total running time of the combined offline and online phase.
Note that this does not affect our performance advantage during the online phase which is the focus of this work.

\section{Related Work}
\label{sec:related}

\citet{psi04} coined the term private set intersection (PSI) and since then progress has been fast and steady.
There exist many variants of PSI protocols.
We can distinguish between PSI protocols which reveal the intersection itself and PSI analytics (or circuit PSI) protocols which allow to compute functions over the intersection, such as the (differentially private) set intersection cardinality.
We can also distinguish between protocols secure in the semi-honest model or the malicious model.
However, we argue in Section~\ref{sec:semihonest} that this distinction is of minor practical importance in PSI (not necessarily PSI analytics) protocols.

\subsection{Industrial Work and Applications}

Google has deployed PSI for computing the value of ad conversion with Mastercard~\cite{psigoogle20,psigoogle15}.
Their protocol uses commutative elliptic curve encryption~\cite{meadows,commutative}, one of the first techniques to compute PSI.
The communication cost of this protocol is asymptotically optimal, very low in practice, and has only been recently matched by other protocols~\cite{opprf,vole}.
The computation cost of this protocol is rather high, but Yung mentions its flexibility as one of its advantages when deployed~\cite{psigoogle15}.
Their protocol is a PSI protocol revealing the intersection and they use extensions to accommodate certain functions, such as the differentially private set intersection cardinality~\cite{psigoogle20}.
These extensions are also applicable to our and other PSI protocols.
Their protocol is secure in the semi-honest, but not the malicious model.
These properties match our protocol, and our protocol is also conceptually simple as can been seen in Algorithm~\ref{alg:protocol}.

There exist other industrial implementations of PSI protocols such as by Facebook~\cite{facebook}, Microsoft~\cite{fhePSI1,fhePSI2}, VISA~\cite{vole} and VMWare~\cite{crypto21}.
The VOLE-PSI protocol supported by Google and VISA~\cite{vole} also has an offline phase like our protocol.
\citet{googlemal} have developed a maliciously secure, circuit PSI protocol for restricted functions, such as sum and cardinality.

A practical proposal is also to use three servers similar to our trusted third party setup \cite{kamara}.
In this setup, the parties share a symmetric key and send keyed hashes of the elements to a third party that compares them but does not have access to the key.
Compared to our construction -- even when using a trusted third party -- this setup has two disadvantages.
First, the third party is involved during the online phase whereas in our construction the third party is online involved in the offline phase.
Second, the online phase still requires cryptographic operations albeit only fast symmetric key ones whereas our online phase does not require any cryptographic operations and hence is significantly faster.

\subsection{PSI protocols}

PSI protocols compute comparison between the elements of the two sets.
The first approach by Meadows \cite{meadows} and Shamir \cite{commutative} is to compute the comparisons over pseudo-random functions (PRF) of the inputs.
Their construction still requires to compute one public-key operation per element.
The number of public-key operations can be reduced by using PCRGs or PRFs based on OT extensions \cite{otextension}.
Using OT extensions, it is possible to compute a small number of public-key operations and only use symmetric key operations for each element.
This significantly speeds up the entire protocol.
Several such constructions of oblivious PRFs (OPRF) exist \cite{otoprf,kkrt,malicious,spot,paxos,vole}.
These constructions can often be made secure in the malicious model with little overhead \cite{malicious,paxos,vole}.
The current state-of-the-art protocols for PSI with the best communication and computation cost \cite{kkrt,vole} are based on this construction.

We use a different construction than these protocols.
Instead of computing a PRF per element which has communication complexity $O(n)$, we perform a secure computation per comparison (which has higher communication complexity).
However, we are able to provide an incredibly fast implementation of comparisons (assuming an offline phase) using only four field operations and no cryptographic operations.
This improves the practical latency over medium-fast to fast networks as our experiments in Section~\ref{sec:runningtime} show.
Our protocol for comparisons is inspired by oblivious linear function evaluation (OLE) and Beaver multiplication triples \cite{beaver}.
\citet{psiole} already present a PSI protocol based on OLE, but use a complicated construction based on polynomials that cannot achieve our efficiency even when adapted to the semi-honest model.

When computing comparisons using a dedicated protocol instead of a plain comparison algorithm over PRFs, a challenge is to reduce the number of comparisons necessary, since each requires interaction.
Naively, there are $O(n^2)$ comparisons.
In a series of works, Pinkas et al.~\cite{hashotex,phasing,psiot,opprf} refined a strategy for Alice and Bob to hash their elements to common bins, such that only $O(n \log n / \log \log n)$ comparisons are necessary between pairs of common bins.
We follow this strategy as described in Section~\ref{sec:protocol}, but replace the comparison protocol with our own which makes this construction more efficient than OT extension-based OPRF ones.
Note that Pinkas et al.~also provide a different strategy for dual cuckoo hashing which leads to a circuit PSI protocols \cite{dualcuckoo}.

PSI analytics or circuit PSI protocols support a larger set of functionalities and hence applications, but are usually slower than PSI protocols.
A first PSI protocol based on the generic secure two-party computation protocol by Yao (garbled circuits) was presented by Huang et al.~\cite{yaoPSI}.
The most efficient current constructions base on OPRF protocols, but use a technique called oblivious, programmable PRF (OPPRF) \cite{origoopprf}.
The idea of an OPPRF is that the value of the PRF at given inputs can be fixed by the key holder. 
This allows the comparison result to be secret shared and available for subsequent secure computations over the intersection.
\citet{crypto21} generalize OPPRF to oblivious key-value stores (OKVS).
The most recent and most efficient circuit PSI protocol (under common wired network conditions) is VOLE-PSI \cite{vole} which follows this construction using OPPRFs.
VOLE-PSI also can be used as a PSI protocol (as any circuit PSI protocol) to which we compare favourably in Section~\ref{sec:evaluation}.

\subsection{Specialized PSI protocols}

There exist several variations of PSI and circuit PSI that cater for specific applications scenarios.
Circuit PSI performs a secure computation on the secret shared outputs of the PSI protocol.
However, the PSI protocols may be a part of a larger secure computation that does not start with a PSI protocol and consequently the inputs must also be secret shared.
\citet{secsharedPSI} present a variation of PSI for this setting.

It has been proposed to perform contact discovery in mobile phone messaging apps, such as Signal, using PSI.
In this case, one set, the cell phone users' one, is much smaller than the other, the messaging provider's one, i.e., the set sizes are imbalanced.
Modified OPRF protocols have been proposed for this setting \cite{imbalancePSI, imbalance}.
However, they require to store the PRFs of the larger set in the downloaded software with a key common to all users and consequently hinge on the security of this key.
Recent protocols without this restriction, which turn out to be also faster, use fully homomorphic encryption (FHE) \cite{fhePSI1,fhePSI2,fasterfhepsi}.
FHE-based constructions can also be used for circuit PSI \cite{rbc1}.

PSI with a semi-trusted third party has received attention under several different names.
It is beneficial when the communication between the two parties is restricted.
Kerschbaum introduced the concept in 2012 as outsourced PSI \cite{outsourcedPSI}.
\citet{arbiterPSI} named the third party an arbiter in 2013 and \citet{delegated} named it delegated PSI in 2015.
Some functions on the intersection, such as intersection cardinality, can be computed in this setting \cite{delegatedcard}.
Our offline phase using a trusted third party could be considered this setting, but our communication complexity between Alice and Bob remains large ($o(n)$).

PSI can also be computed among multiple, more than two, parties.
The protocol with the currently best complexity is by \citet{mulPSI} and the protocol with the best practical performance due to the extensive use of symmetric encryption is by \citet{origoopprf}.
Protocols with variants for the definition of intersection, such as the intersection of a subset with a size above a threshold, exist \cite{threshold,rbc2}.

PSI requires all elements to be unique within one set, i.e., there are no duplicates in each set.
This is different from database joins where elements in a relational table can be duplicated.
Database joins require different matching procedures~\cite{secsharedPSI,joins}.
PSI also performs only exact matches of the elements.
Private intersection with approximate matches is often referred to as private record linkage (PRL).
The challenge in PRL is to find the $O(n)$ matches among the $O(n^2)$ pairs where the hashing strategy used by PSI protocols does not apply.
Instead one can use locality-sensitive hashing \cite{dpprl} and then differentially privately pad the bins or locality-preserving hashing and use sliding windows \cite{prl}.

\section{Conclusions}
\label{sec:conclusions}

PSI protocols have been extensively studied in the literature, and progress in practical latency is only possible by improving constants.
In this paper, we present the first construction of a PSI protocol that is practically efficient and uses no cryptographic operations during the online phase.
Instead, all cryptographic operations are moved to an offline phase.
For the online phase, we present a new comparison protocol which consumes only one OLE tuple per comparison.
The online phase of our protocol outperforms the currently best related work~\cite{kkrt,spot,cm,vole,crypto21} by factors between $1.2$ and $3.5$ depending on the network performance.
There exist many different implementations for the offline phase which can be shared with other privacy-preserving protocols using precomputed, correlated, random secret shares.
Among others, we present a very efficient construction using TEEs, such as Intel's SGX.
The performance of  our PSI protocol with a TEE-based offline phase is currently unmatched.

\section*{Acknowledgements}
We gratefully acknowledge the support of NSERC for grants RGPIN-05849, IRC-537591, and the Royal Bank of Canada for funding this research.


%

\bibliographystyle{abbrvnat}
\bibliography{references}

\begin{thebibliography}{83}
\providecommand{\natexlab}[1]{#1}
\providecommand{\url}[1]{\texttt{#1}}
\expandafter\ifx\csname urlstyle\endcsname\relax
  \providecommand{\doi}[1]{doi: #1}\else
  \providecommand{\doi}{doi: \begingroup \urlstyle{rm}\Url}\fi

\bibitem[Abadi et~al.(2015)Abadi, Terzis, and Dong]{delegated}
A.~Abadi, S.~Terzis, and C.~Dong.
\newblock O-psi: delegated private set intersection on outsourced datasets.
\newblock In \emph{30th {IFIP} International Information Security and Privacy
  Conference ({SEC})}, 2015.

\bibitem[Applebaum et~al.(2017)Applebaum, Damg{\aa}rd, Ishai, Nielsen, and
  Zichron]{applebaum}
B.~Applebaum, I.~Damg{\aa}rd, Y.~Ishai, M.~Nielsen, and L.~Zichron.
\newblock Secure arithmetic computation with constant computational overhead.
\newblock In \emph{37th International Cryptology Conference ({CRYPTO})}, 2017.

\bibitem[Arbitman et~al.(2010)Arbitman, Naor, and Segev]{permutationhash}
Y.~Arbitman, M.~Naor, and G.~Segev.
\newblock Backyard cuckoo hashing: constant worst-case operations with a
  succinct representation.
\newblock In \emph{51st {IEEE} Symposium on Foundations of Computer Science
  ({FOCS})}, 2010.

\bibitem[Arnautov et~al.(2016)Arnautov, Trach, Gregor, Knauth, Martin, Priebe,
  Lind, Muthukumaran, O'Keeffe, Stillwell, Goltzsche, Eyers, Kapitza, Pietzuch,
  and Fetzer]{scone}
S.~Arnautov, B.~Trach, F.~Gregor, T.~Knauth, A.~Martin, C.~Priebe, J.~Lind,
  D.~Muthukumaran, D.~O'Keeffe, M.~Stillwell, D.~Goltzsche, D.~M. Eyers,
  R.~Kapitza, P.~R. Pietzuch, and C.~Fetzer.
\newblock {SCONE:} secure linux containers with intel {SGX}.
\newblock In K.~Keeton and T.~Roscoe, editors, \emph{12th {USENIX} Symposium on
  Operating Systems Design and Implementation, {OSDI} 2016, Savannah, GA, USA,
  November 2-4, 2016}, pages 689--703. {USENIX} Association, 2016.

\bibitem[Asharov et~al.(2013)Asharov, Lindell, Schneider, and Zohner]{alsz}
G.~Asharov, Y.~Lindell, T.~Schneider, and M.~Zohner.
\newblock More efficient oblivious transfer and extensions for faster secure
  computation.
\newblock In \emph{19th {ACM} Conference on Computer and Communications
  Security ({CCS})}, 2013.

\bibitem[Baum et~al.(2020)Baum, Escudero, Pedrouzo{-}Ulloa, Scholl, and
  Troncoso{-}Pastoriza]{ole}
C.~Baum, D.~Escudero, A.~Pedrouzo{-}Ulloa, P.~Scholl, and J.~R.
  Troncoso{-}Pastoriza.
\newblock Efficient protocols for oblivious linear function evaluation from
  ring-lwe.
\newblock In \emph{12th International Conference on Security and Cryptography
  for Networks ({SCN})}, 2020.

\bibitem[Beaver(1991)]{beaver}
D.~Beaver.
\newblock Efficient multiparty protocols using circuit randomization.
\newblock In \emph{11th International Cryptology Conference ({CRYPTO})}, 1991.

\bibitem[Bernstein and Lange(2021)]{curves}
D.~J. Bernstein and T.~Lange.
\newblock {SafeCurves: choosing safe curves for elliptic-curve cryptography},
  2021.
\newblock \url{https://safecurves.cr.yp.to}.

\bibitem[Boyle et~al.(2018)Boyle, Couteau, Gilboa, and Ishai]{compress}
E.~Boyle, G.~Couteau, N.~Gilboa, and Y.~Ishai.
\newblock Compressing vector {OLE}.
\newblock In D.~Lie, M.~Mannan, M.~Backes, and X.~Wang, editors, \emph{24th
  {ACM} Conference on Computer and Communications Security ({CCS})}, 2018.

\bibitem[Boyle et~al.(2019)Boyle, Couteau, Gilboa, Ishai, Kohl, Rindal, and
  Scholl]{silentot}
E.~Boyle, G.~Couteau, N.~Gilboa, Y.~Ishai, L.~Kohl, P.~Rindal, and P.~Scholl.
\newblock Efficient two-round {OT} extension and silent non-interactive secure
  computation.
\newblock In L.~Cavallaro, J.~Kinder, X.~Wang, and J.~Katz, editors,
  \emph{Proceedings of the 2019 {ACM} {SIGSAC} Conference on Computer and
  Communications Security, {CCS} 2019, London, UK, November 11-15, 2019}, pages
  291--308. {ACM}, 2019.

\bibitem[Boyle et~al.(2020)Boyle, Couteau, Gilboa, Ishai, Kohl, and
  Scholl]{boyle1}
E.~Boyle, G.~Couteau, N.~Gilboa, Y.~Ishai, L.~Kohl, and P.~Scholl.
\newblock Efficient pseudorandom correlation generators from ring-lpn.
\newblock In \emph{40th International Cryptology Conference ({CRYPTO})}, 2020.

\bibitem[Buddhavarapu et~al.(2020)Buddhavarapu, Knox, Mohassel, Sengupta,
  Taubeneck, and Vlaskin]{facebook}
P.~Buddhavarapu, A.~Knox, P.~Mohassel, S.~Sengupta, E.~Taubeneck, and
  V.~Vlaskin.
\newblock Private matching for compute.
\newblock \emph{{IACR} Cryptology ePrint Archive}, 2020\penalty0 (599), 2020.

\bibitem[Camenisch and Zaverucha(2009)]{certifiedsets}
J.~Camenisch and G.~M. Zaverucha.
\newblock Private intersection of certified sets.
\newblock In \emph{13th International Conference Financial Cryptography and
  Data Security ({FC})}, 2009.

\bibitem[Chandran et~al.(2022)Chandran, Gupta, and Shah]{pets22}
N.~Chandran, D.~Gupta, and A.~Shah.
\newblock Circuit-psi with linear complexity via relaxed batch {OPPRF}.
\newblock \emph{Proceedings of Privacy Enhancing Technologies}, 2022\penalty0
  (1), 2022.

\bibitem[Chase and Miao(2020)]{cm}
M.~Chase and P.~Miao.
\newblock Private set intersection in the internet setting from lightweight
  oblivious {PRF}.
\newblock In \emph{40th International Cryptology Conference ({CRYPTO})}, 2020.

\bibitem[Chen et~al.(2017)Chen, Laine, and Rindal]{fhePSI1}
H.~Chen, K.~Laine, and P.~Rindal.
\newblock Fast private set intersection from homomorphic encryption.
\newblock In \emph{23rd {ACM} Conference on Computer and Communications
  Security ({CCS})}, 2017.

\bibitem[Chen et~al.(2018)Chen, Huang, Laine, and Rindal]{fhePSI2}
H.~Chen, Z.~Huang, K.~Laine, and P.~Rindal.
\newblock Labeled {PSI} from fully homomorphic encryption with malicious
  security.
\newblock In \emph{24th {ACM} Conference on Computer and Communications
  Security ({CCS})}, 2018.

\bibitem[Chou and Orlandi(2015)]{baseot}
T.~Chou and C.~Orlandi.
\newblock {The Simplest Protocol for Oblivious Transfer}.
\newblock In \emph{4th International Conference on Cryptology and Information
  Security in Latin America ({LATINCRYPT})}, 2015.

\bibitem[Cong et~al.(2021)Cong, Moreno, da~Gama, Dai, Iliashenko, Laine, and
  Rosenberg]{fasterfhepsi}
K.~Cong, R.~C. Moreno, M.~B. da~Gama, W.~Dai, I.~Iliashenko, K.~Laine, and
  M.~Rosenberg.
\newblock Labeled {PSI} from homomorphic encryption with reduced computation
  and communication.
\newblock In \emph{27th {ACM} Conference on Computer and Communications
  Security ({CCS})}, 2021.

\bibitem[Corp.(2021)]{signal}
M.~Corp.
\newblock {Scaling secure enclave environments with Signal and Azure
  confidential computing}.
\newblock
  \url{https://customers.microsoft.com/en-us/story/1374464612401582154-signal-nonprofit-azure-security},
  2021.

\bibitem[Couteau et~al.(2021)Couteau, Rindal, and Raghuraman]{silentottwo}
G.~Couteau, P.~Rindal, and S.~Raghuraman.
\newblock Silver: Silent {VOLE} and oblivious transfer from hardness of
  decoding structured {LDPC} codes.
\newblock In T.~Malkin and C.~Peikert, editors, \emph{Advances in Cryptology -
  {CRYPTO} 2021 - 41st Annual International Cryptology Conference, {CRYPTO}
  2021, Virtual Event, August 16-20, 2021, Proceedings, Part {III}}, volume
  12827 of \emph{Lecture Notes in Computer Science}, pages 502--534. Springer,
  2021.

\bibitem[Cristofaro et~al.(2010)Cristofaro, Kim, and Tsudik]{authpsi}
E.~D. Cristofaro, J.~Kim, and G.~Tsudik.
\newblock Linear-complexity private set intersection protocols secure in
  malicious model.
\newblock In \emph{16th International Conference on the Theory and Application
  of Cryptology and Information Security ({ASIACRYPT})}, 2010.

\bibitem[Damg{\aa}rd and Thorbek(2008)]{thorbek}
I.~Damg{\aa}rd and R.~Thorbek.
\newblock Efficient conversion of secret-shared values between different
  fields.
\newblock \emph{{IACR} Cryptology ePrint Archive}, 2008\penalty0 (221), 2008.

\bibitem[Damg{\aa}rd et~al.(2012)Damg{\aa}rd, Pastro, Smart, and
  Zakarias]{spdz}
I.~Damg{\aa}rd, V.~Pastro, N.~P. Smart, and S.~Zakarias.
\newblock Multiparty computation from somewhat homomorphic encryption.
\newblock In \emph{32nd Annual Cryptology Conference ({CRYPTO})}, 2012.

\bibitem[de~Castro et~al.(2021)de~Castro, Juvekar, and
  Vaikuntanathan]{fast-ole}
L.~de~Castro, C.~Juvekar, and V.~Vaikuntanathan.
\newblock {Fast Vector Oblivious Linear Evaluation from Ring Learning with
  Errors}.
\newblock In \emph{9th Workshop on Encrypted Computing \& Applied Homomorphic
  Cryptography (WAHC)}, 2021.

\bibitem[Dong et~al.(2013{\natexlab{a}})Dong, Chen, Camenisch, and
  Russello]{arbiterPSI}
C.~Dong, L.~Chen, J.~Camenisch, and G.~Russello.
\newblock Fair private set intersection with a semi-trusted arbiter.
\newblock In \emph{27th {IFIP} Conference Data and Applications Security and
  Privacy ({DBSEC})}, 2013{\natexlab{a}}.

\bibitem[Dong et~al.(2013{\natexlab{b}})Dong, Chen, and Wen]{otoprf}
C.~Dong, L.~Chen, and Z.~Wen.
\newblock When private set intersection meets big data: an efficient and
  scalable protocol.
\newblock In \emph{19th {ACM} Conference on Computer and Communications
  Security ({CCS})}, 2013{\natexlab{b}}.

\bibitem[Duong et~al.(2020)Duong, Phan, and Trieu]{delegatedcard}
T.~Duong, D.~H. Phan, and N.~Trieu.
\newblock Catalic: delegated {PSI} cardinality with applications to contact
  tracing.
\newblock In \emph{26th International Conference on the Theory and Application
  of Cryptology and Information Security ({ASIACRYPT})}, 2020.

\bibitem[Even et~al.(1985)Even, Goldreich, and Lempel]{1outof2ot}
S.~Even, O.~Goldreich, and A.~Lempel.
\newblock A randomized protocol for signing contracts.
\newblock \emph{Communications of the {ACM}}, 28, 1985.

\bibitem[Fan and Vercauteren(2012)]{fv}
J.~Fan and F.~Vercauteren.
\newblock Somewhat practical fully homomorphic encryption.
\newblock \emph{{IACR} Cryptology ePrint Archive}, 2012\penalty0 (144), 2012.

\bibitem[Freedman et~al.(2004)Freedman, Nissim, and Pinkas]{psi04}
M.~J. Freedman, K.~Nissim, and B.~Pinkas.
\newblock Efficient private matching and set intersection.
\newblock In \emph{International Conference on the Theory and Applications of
  Cryptographic Techniques ({EUROCRYPT})}, 2004.

\bibitem[Freedman et~al.(2016)Freedman, Hazay, Nissim, and Pinkas]{freedman}
M.~J. Freedman, C.~Hazay, K.~Nissim, and B.~Pinkas.
\newblock Efficient set intersection with simulation-based security.
\newblock \emph{Journal of Cryptology}, 29\penalty0 (1), 2016.

\bibitem[Garimella et~al.(2021)Garimella, Pinkas, Rosulek, Trieu, and
  Yanai]{crypto21}
G.~Garimella, B.~Pinkas, M.~Rosulek, N.~Trieu, and A.~Yanai.
\newblock Oblivious key-value stores and amplification for private set
  intersection.
\newblock In \emph{41st International Cryptology Conference ({CRYPTO})}, 2021.

\bibitem[Ghosh and Nilges(2019)]{psiole}
S.~Ghosh and T.~Nilges.
\newblock An algebraic approach to maliciously secure private set intersection.
\newblock In \emph{38th International Conference on the Theory and Applications
  of Cryptographic Techniques ({EUROCRYPT})}, 2019.

\bibitem[Gilboa(1999)]{gilboa}
N.~Gilboa.
\newblock Two party {RSA} key generation.
\newblock In \emph{19th International Cryptology Conference ({CRYPTO})}, 1999.

\bibitem[Goldreich et~al.(1987)Goldreich, Micali, and Wigderson]{gmw}
O.~Goldreich, S.~Micali, and A.~Wigderson.
\newblock How to play any mental game or a completeness theorem for protocols
  with honest majority.
\newblock In \emph{19th {ACM} Symposium on Theory of Computing ({STOC})}, 1987.

\bibitem[Gonnet(1981)]{gonnet}
G.~H. Gonnet.
\newblock Expected length of the longest probe sequence in hash code searching.
\newblock \emph{Journal of the {ACM}}, 28\penalty0 (2), 1981.

\bibitem[Hastings et~al.(2019)Hastings, Hemenway, Noble, and
  Zdancewic]{hastings}
M.~Hastings, B.~Hemenway, D.~Noble, and S.~Zdancewic.
\newblock Sok: General purpose compilers for secure multi-party computation.
\newblock In \emph{2019 {IEEE} Symposium on Security and Privacy, {SP} 2019,
  San Francisco, CA, USA, May 19-23, 2019}, pages 1220--1237. {IEEE}, 2019.

\bibitem[Hazay and Venkitasubramaniam(2017)]{mulPSI}
C.~Hazay and M.~Venkitasubramaniam.
\newblock Scalable multi-party private set-intersection.
\newblock In \emph{20th International Conference on Practice and Theory in
  Public-Key Cryptography ({PKC})}, 2017.

\bibitem[He et~al.(2017)He, Machanavajjhala, Flynn, and Srivastava]{dpprl}
X.~He, A.~Machanavajjhala, C.~J. Flynn, and D.~Srivastava.
\newblock Composing differential privacy and secure computation: a case study
  on scaling private record linkage.
\newblock In \emph{23rd {ACM} Conference on Computer and Communications
  Security ({CCS})}, 2017.

\bibitem[Huang et~al.(2012)Huang, Evans, and Katz]{yaoPSI}
Y.~Huang, D.~Evans, and J.~Katz.
\newblock Private set intersection: are garbled circuits better than custom
  protocols?
\newblock In \emph{19th Network and Distributed System Security Symposium
  ({NDSS})}, 2012.

\bibitem[Hubert et~al.(2021)Hubert, Geul, and Sehier]{wondershaper}
B.~Hubert, J.~Geul, and S.~Sehier.
\newblock {Wondershaper, a command-line utility for limiting an adapter's
  bandwidth}, 2021.
\newblock \url{https://github.com/magnific0/wondershaper}.

\bibitem[Ion et~al.(2020)Ion, Kreuter, Nergiz, Patel, Saxena, Seth, Raykova,
  Shanahan, and Yung]{psigoogle20}
M.~Ion, B.~Kreuter, A.~E. Nergiz, S.~Patel, S.~Saxena, K.~Seth, M.~Raykova,
  D.~Shanahan, and M.~Yung.
\newblock On deploying secure computing: private
  intersection-sum-with-cardinality.
\newblock In \emph{5th {IEEE} European Symposium on Security and Privacy
  ({EuroS{\&}P})}, 2020.

\bibitem[Ishai et~al.(2003)Ishai, Kilian, Nissim, and Petrank]{otextension}
Y.~Ishai, J.~Kilian, K.~Nissim, and E.~Petrank.
\newblock Extending oblivious transfers efficiently.
\newblock In \emph{23rd Annual International Cryptology Conference ({CRYPTO})},
  2003.

\bibitem[Ishai et~al.(2008)Ishai, Prabhakaran, and Sahai]{foundingoteff}
Y.~Ishai, M.~Prabhakaran, and A.~Sahai.
\newblock Founding cryptography on oblivious transfer --- efficiently.
\newblock In \emph{28th International Cryptology Conference ({CRYPTO})}, 2008.

\bibitem[Ishai et~al.(2009)Ishai, Prabhakaran, and Sahai]{product-sharing}
Y.~Ishai, M.~Prabhakaran, and A.~Sahai.
\newblock Secure arithmetic computation with no honest majority.
\newblock In \emph{6th Theory of Cryptography Conference ({TCC})}, 2009.

\bibitem[Kacsmar et~al.(2020)Kacsmar, Khurram, Lukas, Norton, Shafieinejad,
  Shang, Baseri, Sepehri, Oya, and Kerschbaum]{rbc1}
B.~Kacsmar, B.~Khurram, N.~Lukas, A.~Norton, M.~Shafieinejad, Z.~Shang,
  Y.~Baseri, M.~Sepehri, S.~Oya, and F.~Kerschbaum.
\newblock Differentially private two-party set operations.
\newblock In \emph{5th {IEEE} European Symposium on Security and Privacy
  ({EuroS{\&}P})}, 2020.

\bibitem[Kamara et~al.(2014)Kamara, Mohassel, Raykova, and Sadeghian]{kamara}
S.~Kamara, P.~Mohassel, M.~Raykova, and S.~S. Sadeghian.
\newblock Scaling private set intersection to billion-element sets.
\newblock In \emph{18th International Conference on Financial Cryptography and
  Data Security ({FC})}, 2014.

\bibitem[Keller(2020)]{mp-spdz}
M.~Keller.
\newblock {MP-SPDZ}: A versatile framework for multi-party computation.
\newblock In \emph{26th {ACM} Conference on Computer and Communications
  Security ({CCS})}, 2020.
\newblock \url{https://github.com/data61/MP-SPDZ}.

\bibitem[Keller et~al.(2016)Keller, Orsini, and Scholl]{mascot}
M.~Keller, E.~Orsini, and P.~Scholl.
\newblock {MASCOT:} faster malicious arithmetic secure computation with
  oblivious transfer.
\newblock In \emph{22nd {ACM} Conference on Computer and Communications
  Security ({CCS})}, 2016.

\bibitem[Keller et~al.(2018)Keller, Pastro, and Rotaru]{overdrive}
M.~Keller, V.~Pastro, and D.~Rotaru.
\newblock Overdrive: Making {SPDZ} great again.
\newblock In \emph{37th International Conference on the Theory and Applications
  of Cryptographic Techniques ({EUROCRYPT})}, 2018.

\bibitem[Kerschbaum(2012)]{outsourcedPSI}
F.~Kerschbaum.
\newblock Collusion-resistant outsourcing of private set intersection.
\newblock In \emph{27th {ACM} Symposium on Applied Computing ({SAC})}, 2012.

\bibitem[Khurram and Kerschbaum(2020)]{prl}
B.~Khurram and F.~Kerschbaum.
\newblock {SFour}: a protocol for cryptographically secure record linkage at
  scale.
\newblock In \emph{36th {IEEE} International Conference on Data Engineering
  ({ICDE})}, 2020.

\bibitem[Kilian(1988)]{kilian}
J.~Kilian.
\newblock Founding cryptography on oblivious transfer.
\newblock In \emph{20th {ACM} Symposium on Theory of Computing ({STOC})}, 1988.

\bibitem[Kiss et~al.(2017)Kiss, Liu, Schneider, Asokan, and
  Pinkas]{imbalancePSI}
{\'{A}}.~Kiss, J.~Liu, T.~Schneider, N.~Asokan, and B.~Pinkas.
\newblock Private set intersection for unequal set sizes with mobile
  applications.
\newblock \emph{Proceedings of Privacy Enhancing Technologies}, 2017\penalty0
  (4), 2017.

\bibitem[Kissner and Song(2004)]{threshold}
L.~Kissner and D.~Song.
\newblock Private and threshold set-intersection.
\newblock Technical Report, Carnegie-Mellon University, 2004.

\bibitem[Kolesnikov et~al.(2016)Kolesnikov, Kumaresan, Rosulek, and
  Trieu]{kkrt}
V.~Kolesnikov, R.~Kumaresan, M.~Rosulek, and N.~Trieu.
\newblock Efficient batched oblivious {PRF} with applications to private set
  intersection.
\newblock In \emph{22nd {ACM} Conference on Computer and Communications
  Security ({CCS})}, 2016.

\bibitem[Kolesnikov et~al.(2017)Kolesnikov, Matania, Pinkas, Rosulek, and
  Trieu]{origoopprf}
V.~Kolesnikov, N.~Matania, B.~Pinkas, M.~Rosulek, and N.~Trieu.
\newblock Practical multi-party private set intersection from symmetric-key
  techniques.
\newblock In \emph{23rd {ACM} Conference on Computer and Communications
  Security ({CCS})}, 2017.

\bibitem[Krastnikov et~al.(2020)Krastnikov, Kerschbaum, and Stebila]{joins}
S.~Krastnikov, F.~Kerschbaum, and D.~Stebila.
\newblock Efficient oblivious database joins.
\newblock \emph{Proceedings of the {VLDB} Endowment}, 13\penalty0 (11), 2020.

\bibitem[Mahdavi et~al.(2020)Mahdavi, Humphries, Kacsmar, Krastnikov, Lukas,
  Premkumar, Shafieinejad, Oya, Kerschbaum, and Blass]{rbc2}
R.~A. Mahdavi, T.~Humphries, B.~Kacsmar, S.~Krastnikov, N.~Lukas, J.~A.
  Premkumar, M.~Shafieinejad, S.~Oya, F.~Kerschbaum, and E.~Blass.
\newblock Practical over-threshold multi-party private set intersection.
\newblock In \emph{36th Annual Computer Security Applications Conference
  ({ACSAC})}, 2020.

\bibitem[Meadows(1986)]{meadows}
C.~A. Meadows.
\newblock A more efficient cryptographic matchmaking protocol for use in the
  absence of a continuously available third party.
\newblock In \emph{7th {IEEE} Symposium on Security and Privacy ({S{\&}P})},
  1986.

\bibitem[Miao et~al.(2020)Miao, Patel, Raykova, Seth, and Yung]{googlemal}
P.~Miao, S.~Patel, M.~Raykova, K.~Seth, and M.~Yung.
\newblock Two-sided malicious security for private intersection-sum with
  cardinality.
\newblock In \emph{40th International Cryptology Conference ({CRYPTO})}, 2020.

\bibitem[Mohassel et~al.(2020)Mohassel, Rindal, and Rosulek]{secsharedPSI}
P.~Mohassel, P.~Rindal, and M.~Rosulek.
\newblock Fast database joins and {PSI} for secret shared data.
\newblock In \emph{26th {ACM} Conference on Computer and Communications
  Security ({CCS})}, 2020.

\bibitem[Naor and Pinkas(2006)]{Naor2006ObliviousPE}
M.~Naor and B.~Pinkas.
\newblock Oblivious polynomial evaluation.
\newblock \emph{SIAM Journal of Computing}, 35:\penalty0 1254--1281, 2006.

\bibitem[Pagh and Rodler(2004)]{cuckoo}
R.~Pagh and F.~F. Rodler.
\newblock Cuckoo hashing.
\newblock \emph{Journal of Algorithms}, 51\penalty0 (2), 2004.

\bibitem[Pinkas et~al.(2014)Pinkas, Schneider, and Zohner]{hashotex}
B.~Pinkas, T.~Schneider, and M.~Zohner.
\newblock Faster private set intersection based on {OT} extension.
\newblock In \emph{23rd {USENIX} Conference on Security Symposium ({USENIX}
  Security)}, 2014.

\bibitem[Pinkas et~al.(2015)Pinkas, Schneider, Segev, and Zohner]{phasing}
B.~Pinkas, T.~Schneider, G.~Segev, and M.~Zohner.
\newblock Phasing: private set intersection using permutation-based hashing.
\newblock In J.~Jung and T.~Holz, editors, \emph{24th {USENIX} Security
  Symposium ({USENIX} Security)}, 2015.

\bibitem[Pinkas et~al.(2018{\natexlab{a}})Pinkas, Schneider, Weinert, and
  Wieder]{dualcuckoo}
B.~Pinkas, T.~Schneider, C.~Weinert, and U.~Wieder.
\newblock Efficient circuit-based {PSI} via cuckoo hashing.
\newblock In \emph{37th International Conference on the Theory and Applications
  of Cryptographic Techniques ({EUROCRYPT})}, 2018{\natexlab{a}}.

\bibitem[Pinkas et~al.(2018{\natexlab{b}})Pinkas, Schneider, and Zohner]{psiot}
B.~Pinkas, T.~Schneider, and M.~Zohner.
\newblock Scalable private set intersection based on {OT} extension.
\newblock \emph{{ACM} Transactions on Privacy and Security}, 21\penalty0 (2),
  2018{\natexlab{b}}.

\bibitem[Pinkas et~al.(2019{\natexlab{a}})Pinkas, Rosulek, Trieu, and
  Yanai]{spot}
B.~Pinkas, M.~Rosulek, N.~Trieu, and A.~Yanai.
\newblock {SpOT-Light}: Lightweight private set intersection from sparse {OT}
  extension.
\newblock In \emph{39th International Cryptology Conference ({CRYPTO})},
  2019{\natexlab{a}}.

\bibitem[Pinkas et~al.(2019{\natexlab{b}})Pinkas, Schneider, Tkachenko, and
  Yanai]{opprf}
B.~Pinkas, T.~Schneider, O.~Tkachenko, and A.~Yanai.
\newblock Efficient circuit-based {PSI} with linear communication.
\newblock In \emph{38th International Conference on the Theory and Applications
  of Cryptographic Techniques ({EUROCRYPT})}, 2019{\natexlab{b}}.

\bibitem[Pinkas et~al.(2020)Pinkas, Rosulek, Trieu, and Yanai]{paxos}
B.~Pinkas, M.~Rosulek, N.~Trieu, and A.~Yanai.
\newblock {PSI} from {PaXoS}: fast, malicious private set intersection.
\newblock In \emph{39th Annual International Conference on the Theory and
  Applications of Cryptographic Techniques ({EUROCRYPT})}, 2020.

\bibitem[Rabin(1981)]{ot}
M.~O. Rabin.
\newblock How to exchange secrets with oblivious transfer.
\newblock Technical Report TR-81, Aiken Computation Lab, Harvard University,
  1981.

\bibitem[Resende and Aranha(2018)]{imbalance}
A.~C.~D. Resende and D.~F. Aranha.
\newblock Faster unbalanced private set intersection.
\newblock In \emph{22nd International Conference on Financial Cryptography and
  Data Security ({FC})}, 2018.

\bibitem[Rindal and Rosulek(2017)]{malicious}
P.~Rindal and M.~Rosulek.
\newblock Improved private set intersection against malicious adversaries.
\newblock In \emph{36th Annual International Conference on the Theory and
  Applications of Cryptographic Techniques ({EUROCRYPT})}, 2017.

\bibitem[Rindal and Schoppmann(2021)]{vole}
P.~Rindal and P.~Schoppmann.
\newblock {VOLE-PSI:} fast {OPRF} and circuit-{PSI} from vector-{OLE}.
\newblock In \emph{40th International Conference on the Theory and Applications
  of Cryptographic Techniques ({EUROCRYPT})}, 2021.

\bibitem[Scholl et~al.(2017)Scholl, Smart, and Wood]{tripleservice2}
P.~Scholl, N.~P. Smart, and T.~Wood.
\newblock When it’s all just too much: Outsourcing mpc-preprocessing.
\newblock \emph{{IACR} Cryptology ePrint Archive}, 2017\penalty0 (262), 2017.

\bibitem[Schoppmann et~al.(2019)Schoppmann, Gasc{\'{o}}n, Reichert, and
  Raykova]{vector-ole}
P.~Schoppmann, A.~Gasc{\'{o}}n, L.~Reichert, and M.~Raykova.
\newblock Distributed vector-ole: Improved constructions and implementation.
\newblock In \emph{25th {ACM} Conference on Computer and Communications
  Security ({CCS})}, 2019.

\bibitem[SEAL()]{sealcrypto}
SEAL.
\newblock {M}icrosoft {SEAL} (release 3.6).
\newblock \url{https://github.com/Microsoft/SEAL}, Nov. 2020.
\newblock Microsoft Research, Redmond, WA.

\bibitem[Shamir(1980)]{commutative}
A.~Shamir.
\newblock On the power of commutativity in cryptography.
\newblock In \emph{7th International Colloquium on Automata, Languages and
  Programming ({ICALP})}, 1980.

\bibitem[Smart and Tanguy(2019)]{tripleservice1}
N.~P. Smart and T.~Tanguy.
\newblock Taas: Commodity {MPC} via triples-as-a-service.
\newblock In \emph{{ACM} Cloud Computing Security Workshop, ({CCSW})}, 2019.

\bibitem[Wang et~al.(2016)Wang, Malozemoff, and Katz]{emp-toolkit}
X.~Wang, A.~J. Malozemoff, and J.~Katz.
\newblock {EMP-toolkit: Efficient MultiParty computation toolkit}.
\newblock \url{https://github.com/emp-toolkit}, 2016.

\bibitem[Yung(2015)]{psigoogle15}
M.~Yung.
\newblock From mental poker to core business: why and how to deploy secure
  computation protocols?
\newblock In \emph{21st {ACM} Conference on Computer and Communications
  Security ({CCS})}, 2015.

\end{thebibliography}


\end{document}